\renewcommand{\cite}[1]{\citeyear{#1}}
\theoremstyle{plain}
\newtheorem{theorem}{Theorem}
\newtheorem*{theorem*}{Theorem}
\Crefname{theorem}{Theorem}{Theorems}
\crefname{theorem}{theorem}{theorems}
\newtheorem*{proposition*}{Proposition}
\Crefname{proposition}{Proposition}{Propositions}
\crefname{proposition}{proposition}{propositions}
\newtheorem{corollary}{Corollary}
\newtheorem*{corollary*}{Corollary}
\Crefname{corollary}{Corollary}{Corollaries}
\crefname{corollary}{corollary}{corollaries}
\newtheorem{lemma}{Lemma}
\newtheorem*{lemma*}{Lemma}
\Crefname{lemma}{Lemma}{Lemmas}
\crefname{lemma}{lemma}{lemmas}
\newtheorem*{conjecture*}{Conjecture}
\Crefname{conjecture}{Conjecture}{Conjectures}
\crefname{conjecture}{conjecture}{conjectures}
\theoremstyle{definition}
\newtheorem*{definition*}{Definition}
\Crefname{definition}{Definition}{Definitions}
\crefname{definition}{definition}{definitions}
\newtheorem{assumption}{Assumption}
\newtheorem*{assumption*}{Assumption}
\Crefname{assumption}{Assumption}{Assumptions}
\crefname{assumption}{assumption}{assumptions}
\newcommand{\indep}{\perp \!\!\! \perp}
\long\def\symbolfootnote[#1]#2{\begingroup%
\def\thefootnote{\fnsymbol{footnote}}\footnote[#1]{#2}\endgroup}
\newcommand{\documenttitle}{Thesis}
\renewcommand{\max}{\operatornamewithlimits{max}}
\pgfplotsset{compat=1.15}
\title{Design-Based Multi-Way Clustering}
\author{Luther Yap}
\date{\today}
\begin{document}
\pagenumbering{arabic}
\maketitle

\begin{abstract}
This paper extends the design-based framework to settings with multi-way cluster dependence, and shows how multi-way clustering can be justified when clustered assignment and clustered sampling occurs on different dimensions, or when either sampling or assignment is multi-way clustered. Unlike one-way clustering, the plug-in variance estimator in multi-way clustering is no longer conservative, so valid inference either requires an assumption on the correlation of treatment effects or a more conservative variance estimator. Simulations suggest that the plug-in variance estimator is usually robust, and the conservative variance estimator is often too conservative.  
\end{abstract}


\section{Introduction}

There have been thousands of papers that adjust standard errors for clustering in linear regression. To address the issue of when such clustering is appropriate, \citet{abadie2023should} extended the design-based framework of \citet{abadie2020sampling} to show how clustering on a particular dimension, such as state, is justified when there is clustered sampling or assignment on that dimension. However, their framework and theory apply only to one-way clustering, which leaves an open question on how papers that clustered on multiple dimensions, such those applying the plug-in variance of \citet{cameron2011robust}, henceforth CGM, can be justified. 
There are situations where sampling and assignment occur on different clustering dimensions (e.g., \citet{gruber1995health}). 
There are also sampling (e.g., \citet{hersch1998compensating}) or assignment (e.g., \citet{nunn2011slave}) mechanisms that are multi-way clustered. 
Then, there is an open question of whether the results of \citet{abadie2023should} generalize to multi-way clustering settings. This paper fills the gap. 

Until recently, accounting for the large sample behavior of design-based settings with multi-way clustering has been a difficult problem. 
Asymptotic theory for variables that have multi-dimensional dependence has thus far relied on separate exchangeability (e.g., \citet{davezies2018asymptotic}). 
Separate exchangeability implies that the marginal distributions of clusters are exchangeable. \citep{mackinnon2021wild}
However, by construction, separate exchangeability is violated in a design-based framework because the residual $u_i = W_i u_i(1) + (1-W_i) u_i(0)$ depends on nonstochastic potential residual $u_i(w)$ and treatment $W_i$. Hence, even if the treatments $W_i$'s were identically distributed over clusters, the marginal distribution of $u_i$ cannot be the same as the $W_i$'s are weighted differently. Hence, limit theory that accommodates heterogeneity of clusters and observations is required.
By building on the central limit theorem in \citet{yap2023general}, this paper obtains results on large-sample behavior of standard estimators in this environment. 

This paper shows that while most results in the one-way design-based setup generalize to multi-way clustering, there are some nuances in multi-way clustering. 
As we would expect, when sampling and assignment occur on different clustering dimensions, it is necessary to cluster on both dimensions to obtain valid inference. 
When sampling is two-way clustered, it is necessary to cluster on both dimensions for valid inference. 
When assignment is one-way clustered, \citet{abadie2023should} showed that the standard \citet{liang1986longitudinal} plug-in variance estimator is conservative, even though it is still necessary to cluster in some way for valid inference.
Similarly, in several data-generating processes with multi-way clustered assignment, the CGM plug-in estimator is conservative. 
However, unlike one-way clustering, the CGM  estimator is no longer always conservative. 
In fact, it is possible to construct a data-generating process that makes the CGM variance estimator anticonservative when there is multiway sampling. 

In response to the anticonservativeness of the CGM estimator in design-based settings, there are two approaches that empirical researchers may take. The first approach is to make an assumption on how the individual treatment effects are correlated within the same cluster: CGM is conservative when the correlation is positive, which is reasonable in most applications. The second approach is to remain agnostic and to use CGM2, a more conservative version of the CGM variance estimator proposed by \citet{davezies2018asymptotic}. Since the simulations show that CGM2 is often unnecessarily conservative, making an assumption on the correlation is usually the more reasonable approach.

\section{Setting}
We have a sequence of populations indexed by $k$, each with $n_k$ units. Each unit is indexed by $i = 1, \cdots , n_k$, partitioned into clusters on two dimensions $G,H$, indexed by $g$ and $h$. Let $m=(g,h)$ denote the intersection of two cluster indices that are nonempty. 
Let $g_{ki}, h_{ki}$ denote the cluster that unit $i$ belongs to on the respective dimensions, and $m_{ki}$ its intersection of clusters. For treatment variable $W \in \{ 0,1 \}$, we have the potential outcome is denoted $y_{ki} (w)$ that is nonstochastic.

We are interested in the population average treatment effect (ATE):
\begin{align*}
	\tau_k &= \frac{1}{n_k} \sum_{i=1}^{n_k} (y_{ki} (1) - y_{ki} (0)) 
\end{align*}

With $\alpha_k := (1/n_k)\sum_{i=1}^{n_k} y_{ki}(0)$ and residuals $U_{ki} := Y_{ki} - \alpha_k - \tau_k W_{ki}$, the potential residuals are denoted:
\begin{align*}
    u_{ki}(1) &= y_{ki}(1) - (\alpha_k + \tau_k) \\
    u_{ki}(0) &= y_{ki}(0) - \alpha_k
\end{align*}

Let $R_{ki}$ denote the sampling indicator, so observation $i$ in population $k$ is observed when $R_{ki}=1$. Similarly, $W_{ki}$ denotes the treatment indicator. Hence, for $R_{ki}=1$, we observe the triple $\{ Y_{ki}, W_{ki}, m_{ki} \}$, with $Y_{ki}= W_{ki} y_{ki}(1) + (1-W_{ki})y_{ki}(0)$. The high-level assumption on assignment and sampling is that $(R_{ki},W_{ki}) \indep (R_{kj},W_{kj})$ whenever they do not share any cluster, and that sampling and assignments are independent, as stated in Assumption \ref{asmp:indep}.

\begin{assumption} \label{asmp:indep}
    $(R_{ki},W_{ki}) \indep (R_{kj},W_{kj})$ if $g_{ki} \ne g_{kj}$ and $h_{ki} \ne h_{kj}$. Further, $\{ R_{ki} \}_{i=1}^{n_k} \indep \{ W_{ki} \}_{i=1}^{n_k}$. For all $i$, there exists some $b_{k1}$ and $b_{k0}$ such that $E\left[R_{ki}W_{ki}\right] = b _{k1}$ and $E\left[R_{ki}\left(1-W_{ki}\right)\right] =b_{k0}$.
\end{assumption}

$b_{k1}$ is the probability that an individual is observed and treated; $b_{k0}$ is the expected probability that an individual is observed and untreated. 

The above assumption can be generated from several design-based settings. One possibility is to adapt the data-generating process from \citet{abadie2023should} for one-way clustering, just that the sampling and assignment processes are allowed to occur on different clustering dimensions. This environment strictly generalizes their setting: their setting is a special case where both sampling and assignment occur on the same dimension, and everyone is in their own cluster on the other dimension. Another possibility is to have multi-way clustering occur on either the assignment or sampling dimension, and independence over units on the other dimension. I detail both possibilities in the next two subsections, and provide empirical examples.

\subsection{Sampling and Assignment on Different Dimensions}
Without loss of generality, suppose there is clustered sampling on the $G$ dimension. Here, every $G$ cluster is independently sampled with probability $q_k$. If a cluster is sampled, then units within the cluster are sampled independently with probability $p_k$. Clustered assignment can occur on the $H$ dimension. Every cluster on the $H$ dimension has an assignment probability $B_{kh}$, drawn from a distribution with mean $\mu_k$ and $\sigma_k^2$. Then, for a given cluster $h$, units within that cluster are independently assigned treatment with probability $B_{kh}$. 

An empirical example is \citet{gruber1995health}, who study the effect of health insurance on retirement. We may want to cluster by household and state-year cell. They use the CPS data which has a sample that is negligibly small relative to the superpopulation of households, and each household reveals data on employment in previous years (where they could have moved across different states). It then seems reasonable to believe that sampling occurs on the household dimension. Assignment occurs on the state dimension because insurance is affected by state policy. Since the same incumbent party has differing policies over states, and the same state has differing policies when the party differs over time, assignment to health insurance varies by state clusters. 

\subsection{Multiway Clustering on Assignment}
This subsection explains a possible mechanism for multiway clustering on the assignment dimension. Since the setup for multiway sampling is analogous, its exposition is omitted for brevity. 

Data is generated by independently drawing $A_{kg}\in[0,1]$, $B_{kh}\in[0,1]$ and $e_{ki}\sim U[0,1]$, with $W_{ki}=1\left\{ e_{i}<A_{g(i)}B_{h(i)}\right\}$. The random variables $A_{kg}$ and $B_{kh}$ have means $\mu_{Ak}, \mu_{Bk}$ and variances $\sigma_{Ak}^2, \sigma_{Bk}^2$ respectively. This process nests several cases. If assignment is one-way clustered, then we can simply set $A_{kg}=1$. A special case is where assignment occurs at the intersection level, and we need both dimensions $G$ and $H$ to be assigned treatment for the unit to be treated. Then, $A_{kg}, B_{kh} \in \{0,1 \}$. This mechanism also nests the assignment mechanism where a unit is treated whenever either of its clustering dimensions is treated. To see this, observe that the unit is untreated only if both its clusters are untreated, so we can simply switch the labels of treatment and non-treatment. 

As an example of multi-way assignment, \citet{nunn2011slave} studied the effect of slave trade on trust today. They clustered by ethnic groups and district. The unit of observation is an individual. Both these dimensions affect assignment because people of similar ethnicities have similar history of slave trade. District is relevant since people in the same location are more likely hit by slave trade. 

An example of multi-way sampling is from \citet{hersch1998compensating}, who was interested in the effect of injury risk on wages. She clustered by industry and occupation. When sampling from a large number of industries and occupations, we would observe an individual only when both his industry and his occupation were sampled.  

\section{Least Squares Estimator and Variance}

Let $N_k := \sum_{i=1}^{n_k} R_{ki}$ denote the number of sampled units. Further, $N_{k1} := \sum_{i=1}^{n_k} R_{ki} W_{ki}$ and $N_{k0} := \sum_{i=1}^{n_k} R_{ki} (1-W_{ki})$. The least squares estimator is:
\begin{align*}
    \hat{\tau}_k &= \frac{1}{N_{k1}} \sum_{i=1}^{n_k} R_{ki} W_{ki} Y_{ki} - \frac{1}{N_{k0}} \sum_{i=1}^{n_k} R_{ki} (1-W_{ki}) Y_{ki}
\end{align*}

The rest of this section first shows how the estimator is asymptotically normal, and derives an expression for its asymptotic variance. Then, it presents the asymptotic limits of various commonly-used variance estimators.

\subsection{Large Sample Properties of OLS Estimator} \label{sec:ols_estimator}
To state the main result, I first define a few terms. Let $\mathcal{N}_i$ denote the neighborhood of $i$, which is the set of observations that are plausibly correlated with $i$. In the multi-way clustering context, $\mathcal{N}_i := \{ j: g_{ki}=g_{kj} \} \cup \{ j: h_{ki}=h_{kj} \}$. For $C \in \{G ,H \}$, let $\mathcal{N}^C_c$ denote the set of observations in cluster $c$, and $N^C_c := |\mathcal{N}^C_c|$ denote the number of observations in cluster $c$ on the $C$ dimension. Further, define $\xi_{ki}$ as the demeaned residual for individual $i$ that features in the variance of $\hat{\tau}_k$:
\begin{align*}
\xi_{ki} &:= \frac{1}{b_{k1}}\left(R_{ki}W_{ki}-b_{k1}\right)u_{ki}(1)-\frac{1}{b_{k0}}\left(R_{ki}\left(1-W_{ki}\right)-b_{k0}\right)u_{ki}(0)
\end{align*}

\begin{assumption} \label{asmp:regularity}
    Let $Q_{ki}$ denote $R_{ki} W_{ki}$, $R_{ki} (1-W_{ki})$ or $\xi_{ki}$. For some positive $K_0<\infty$, $C\in \{ G,H \}$ and $\lambda_k = Var(\sum_{i=1}^{n_k} Q_{ki})$, the following hold:
    \begin{enumerate}
        \item $|y_{ki}(w)| \leq K_0$.
        \item $\frac{1}{\lambda_k} \max_c (N^C_c)^2 \rightarrow 0$.
        \item $\frac{1}{\lambda_k} \sum_c (N^C_c)^2 \leq K_0$.
    \end{enumerate}
\end{assumption}

These regularity conditions are required in the multi-way clustering context so that the central limit theorem (CLT) from \citet{yap2023general} can be applied. These regularity conditions require bounded potential outcomes, the contribution of the largest cluster to the total variance be small, and a summability condition. 

\begin{theorem} \label{thm:tau_distr}
    Under \Cref{asmp:indep} and \Cref{asmp:regularity},
    \begin{equation}
    \frac{\sqrt{N_k} (\hat{\tau}_k - \tau_k)}{\sqrt{v_k}} \xrightarrow{d} N(0,1)
    \end{equation}   
    where 
\begin{align*}
    v_k &:= \frac{N_k}{n_k^2} \sum_{i=1}^{n_k} \sum_{j \in \mathcal{N}_i} E[\xi_{ki} \xi_{kj}]  
\end{align*}
and
\begin{align*}
    E\left[\xi_{ki}\xi_{kj}\right] & =\left(\frac{1}{b_{k1}^{2}}E\left[R_{ki}R_{kj}\right]E\left[W_{ki}W_{kj}\right]-1\right)u_{ki}(1)u_{kj}(1)  \\
    &\qquad +\left(\frac{1}{b_{k0}^{2}}E\left[R_{ki}R_{kj}\right]E\left[\left(1-W_{ki}\right)\left(1-W_{kj}\right)\right]-1\right)u_{ki}(0)u_{kj}(0)\\
    & \qquad-\left(\frac{1}{b_{k1}b_{k0}}E\left[R_{ki}R_{kj}\right]E\left[W_{ki}\left(1-W_{kj}\right)\right]-1\right)u_{ki}(1)u_{kj}(0)  \\
    &\qquad-\left(\frac{1}{b_{k1}b_{k0}}E\left[R_{ki}R_{kj}\right]E\left[\left(1-W_{ki}\right)W_{kj}\right]-1\right)u_{ki}(0)u_{kj}(1)
\end{align*}  
\end{theorem}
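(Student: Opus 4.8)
The plan is to reduce the claim to a self-normalized central limit theorem for the single sum $\sum_{i=1}^{n_k}\xi_{ki}$ and then invoke the CLT of \citet{yap2023general}. First I would substitute the potential-residual decomposition $y_{ki}(1)=\alpha_k+\tau_k+u_{ki}(1)$ and $y_{ki}(0)=\alpha_k+u_{ki}(0)$ into $\hat\tau_k$. Because a treated sampled unit contributes $Y_{ki}=y_{ki}(1)$ and an untreated sampled unit contributes $Y_{ki}=y_{ki}(0)$, the intercept $\alpha_k$ and the slope $\tau_k$ cancel across the two sample means, leaving the exact identity
\begin{equation*}
\hat\tau_k-\tau_k=\frac{1}{N_{k1}}\sum_{i=1}^{n_k}R_{ki}W_{ki}u_{ki}(1)-\frac{1}{N_{k0}}\sum_{i=1}^{n_k}R_{ki}(1-W_{ki})u_{ki}(0).
\end{equation*}

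Next I would replace the random denominators $N_{k1},N_{k0}$ by their means $E[N_{k1}]=n_kb_{k1}$ and $E[N_{k0}]=n_kb_{k0}$. The key algebraic fact is that $\sum_i u_{ki}(1)=\sum_i u_{ki}(0)=0$ by the definitions of $\alpha_k$ and $\tau_k$; this makes $E[\xi_{ki}]=0$ and yields the exact identity $\frac{1}{n_k}\sum_i\xi_{ki}=\frac{1}{n_kb_{k1}}\sum_i R_{ki}W_{ki}u_{ki}(1)-\frac{1}{n_kb_{k0}}\sum_i R_{ki}(1-W_{ki})u_{ki}(0)$, since the demeaning terms $\mp u_{ki}(w)$ sum to zero. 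Hence $\hat\tau_k-\tau_k$ and $\frac{1}{n_k}\sum_i\xi_{ki}$ differ only through the ratio corrections $\bigl(\tfrac{1}{N_{kw}}-\tfrac{1}{n_kb_{kw}}\bigr)$. I would control these with a Slutsky/ratio argument: the boundedness of potential outcomes and the variance conditions in \Cref{asmp:regularity} give $N_{kw}/(n_kb_{kw})\xrightarrow{p}1$, so the remainder is $o_p(\sqrt{\lambda_k}/n_k)$, negligible relative to the leading term.

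The convenient cancellation is that $v_k=\frac{N_k}{n_k^2}\sum_i\sum_{j\in\mathcal N_i}E[\xi_{ki}\xi_{kj}]$ carries exactly the same random factor $N_k$ as the numerator's normalization, so $\sqrt{N_k}(\hat\tau_k-\tau_k)/\sqrt{v_k}$ collapses (up to the negligible remainder above) to $\sum_i\xi_{ki}/\sqrt{\mathrm{Var}(\sum_i\xi_{ki})}$; no separate argument for the randomness of $N_k$ is then needed. To identify the variance I would use that $\xi_{ki}$ is mean-zero and that \Cref{asmp:indep} forces $E[\xi_{ki}\xi_{kj}]=0$ whenever $i,j$ share no cluster, so the double sum collapses to the neighborhood sum $\sum_i\sum_{j\in\mathcal N_i}E[\xi_{ki}\xi_{kj}]=\mathrm{Var}(\sum_i\xi_{ki})=\lambda_k$ with $Q_{ki}=\xi_{ki}$. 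Applying the \citet{yap2023general} CLT to $\sum_i\xi_{ki}$ then gives asymptotic normality, with \Cref{asmp:regularity} supplying precisely its hypotheses: bounded summands, the negligibility $\max_c(N^C_c)^2/\lambda_k\to0$, and the summability $\sum_c(N^C_c)^2/\lambda_k\le K_0$.

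Finally, the explicit formula for $E[\xi_{ki}\xi_{kj}]$ follows by multiplying out the two factors of $\xi$, taking expectations term by term, and using the independence of sampling and assignment in \Cref{asmp:indep} to factor $E[R_{ki}R_{kj}W_{ki}W_{kj}]=E[R_{ki}R_{kj}]E[W_{ki}W_{kj}]$ and its three analogues, together with $E[R_{ki}W_{ki}]=b_{k1}$ and $E[R_{ki}(1-W_{ki})]=b_{k0}$. I expect the main obstacle to be the linearization combined with verifying the preconditions of the imported CLT: one must show the random-denominator remainder is asymptotically negligible relative to $\sqrt{\lambda_k}$ and confirm that the neighborhoods $\mathcal N_i$ induce the dependency structure required by \citet{yap2023general}. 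The remaining steps are routine moment expansions.
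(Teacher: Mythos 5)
Your proposal is correct and takes essentially the same route as the paper's own proof: the same exact identity for $\hat\tau_k-\tau_k$ after exploiting $\sum_i u_{ki}(w)=0$, the same ratio/Slutsky step replacing the random denominators $N_{kw}$ by $n_k b_{kw}$ (the paper writes this as $\frac{b_{kw}}{\hat b_{kw}}=1+o_P(1)$ via the law of large numbers implied by \citet{yap2023general}), the same observation that $N_k$ cancels between the normalization and $v_k$, the same application of the \citet{yap2023general} CLT to $\sum_i\xi_{ki}$, and the same term-by-term expansion of $E[\xi_{ki}\xi_{kj}]$ using independence of sampling and assignment. The only differences are presentational, so nothing further is needed.
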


In \Cref{thm:tau_distr}, the variance expression is a function of expectations of $R$ and $W$ cross products across different units. These cross-products have different expressions, depending on the environment that generated the data, as stated in Lemmas \ref{lem:diffRW} and \ref{lem:mwclusW} below.

\begin{lemma} \label{lem:diffRW}
    Suppose \Cref{asmp:indep} holds, there is clustered sampling on dimension $G$, and clustered assignment on dimension $H$. Then, $g_{ki} \ne g_{kj}$ implies $E\left[R_{ki}R_{kj}\right] = p_{k}^{2}q_{k}^{2}$ and $h_{ki} \ne h_{kj}$ implies $E\left[W_{ki}W_{kj}\right]=\mu^{2}$. If $g_{ki}=g_{kj}$, $E\left[R_{ki}R_{kj}\right]=q_{k}p_{k}^{2}$. If $h_{ki}=h_{kj}$,  $E\left[W_{ki}W_{kj}\right]= \sigma^{2}+\mu^{2}$, $E\left[W_{ki}\left(1-W_{kj}\right)\right] = \mu\left(1-\mu\right)-\sigma^{2}$, and $E\left[\left(1-W_{ki}\right)\left(1-W_{kj}\right)\right] =\left(1-\mu\right)^{2}+\sigma^{2}$.
\end{lemma}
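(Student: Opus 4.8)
The plan is to treat every claim as a direct moment computation, exploiting the two independence structures guaranteed by the setting: sampling is independent of assignment (so the $E[R_{ki}R_{kj}]$ and $E[W\cdots W]$ factors separate, as already reflected in \Cref{thm:tau_distr}), and within each dimension the randomness decomposes hierarchically into a cluster-level draw and conditionally independent unit-level draws. I would first fix the representations $R_{ki}=S_{g_{ki}}T_{ki}$, where $S_g\sim\mathrm{Bernoulli}(q_k)$ indicates that $G$-cluster $g$ is sampled and $T_{ki}\sim\mathrm{Bernoulli}(p_k)$ is the conditionally independent within-cluster draw, and $W_{ki}\mid B_{kh_{ki}}\sim\mathrm{Bernoulli}(B_{kh_{ki}})$ with $B_{kh}$ drawn independently across $h$ with mean $\mu$ and variance $\sigma^2$. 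All four identities then follow by iterated expectations, with the sampling and assignment computations never interacting because of the cross-dimension independence in \Cref{asmp:indep}.

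For the sampling cross-products I would split on whether the two units share a $G$-cluster. When $g_{ki}\ne g_{kj}$ the draws $S_{g_{ki}},S_{g_{kj}},T_{ki},T_{kj}$ are mutually independent, so $E[R_{ki}R_{kj}]=E[S_{g_{ki}}]E[S_{g_{kj}}]E[T_{ki}]E[T_{kj}]=q_k^2p_k^2$. When $g_{ki}=g_{kj}$ the two units share the single indicator $S_g$, and the key move is that $S_g^2=S_g$ for a Bernoulli variable, giving $E[R_{ki}R_{kj}]=E[S_g^2]E[T_{ki}]E[T_{kj}]=q_kp_k^2$; the single factor $q_k$ rather than $q_k^2$ reflects the shared within-cluster sampling event. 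The same logic delivers the marginals $b_{k1}=E[R_{ki}]E[W_{ki}]=q_kp_k\mu$ and $b_{k0}=q_kp_k(1-\mu)$ that calibrate the statement, and confirms they are constant in $i$ as \Cref{asmp:indep} requires.

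For the assignment cross-products I would condition on the cluster-level probabilities $B_{kh}$. When $h_{ki}\ne h_{kj}$ independence across $H$-clusters gives $E[W_{ki}W_{kj}]=E[B_{kh_{ki}}]E[B_{kh_{kj}}]=\mu^2$. When $h_{ki}=h_{kj}$ the two treatments are conditionally independent given the shared $B_{kh}$, so $E[W_{ki}W_{kj}\mid B_{kh}]=B_{kh}^2$ and the substantive identity is the variance decomposition $E[B_{kh}^2]=\sigma^2+\mu^2$, yielding $E[W_{ki}W_{kj}]=\sigma^2+\mu^2$. The remaining two within-cluster moments then follow by linearity from this and the marginal $E[W_{ki}]=\mu$: $E[W_{ki}(1-W_{kj})]=\mu-(\sigma^2+\mu^2)=\mu(1-\mu)-\sigma^2$ and $E[(1-W_{ki})(1-W_{kj})]=1-2\mu+(\sigma^2+\mu^2)=(1-\mu)^2+\sigma^2$.

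None of the individual steps is difficult; the only genuinely substantive point, and the one I would flag as the crux, is recognizing that the within-cluster dependence in each dimension is produced entirely by the shared latent draw ($S_g$ for sampling, $B_{kh}$ for assignment), so that the ``extra'' correlation is captured by $E[S_g^2]-E[S_g]^2=q_k(1-q_k)$ and $E[B_{kh}^2]-E[B_{kh}]^2=\sigma^2$ respectively. Everything else is bookkeeping over the same-versus-different-cluster cases, so the main risk is clerical rather than conceptual.
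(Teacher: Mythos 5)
Your proposal is correct and follows essentially the same route as the paper's proof: iterated expectations conditioning on the shared cluster-level draw ($S_g$ for sampling, $B_{kh}$ for assignment), with independence across clusters handling the unshared cases and the identity $E[B_{kh}^2]=\sigma^2+\mu^2$ delivering the within-cluster moments. The paper's version is just terser, stating the $E[W_{ki}W_{kj}]$ computation and declaring the remaining cases ``analogous,'' whereas you write out the hierarchical representation and the linearity steps explicitly.
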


\begin{lemma}\label{lem:mwclusW}
Suppose \Cref{asmp:indep} holds and there is multiway clustering in assignment. Then, if $m_{ki}=m_{kj}$, then $E\left[W_{ki}W_{kj}\right]=\left(\mu_{A}^{2}+\sigma_{A}^{2}\right)\left(\mu_{B}^{2}+\sigma_{B}^{2}\right)$. If $g_{ki}=g_{kj},h_{ki} \ne h_{kj}$, then $E\left[W_{ki}W_{kj}\right]=\left(\mu_{A}^{2}+\sigma_{A}^{2}\right)\mu_{B}^{2}$. If $g_{ki}\ne g_{kj}$ and $h_{ki} \ne h_{kj}$, then $E\left[W_{ki}W_{kj}\right]=\mu_{A}^{2}\mu_{B}^{2}$. 
\end{lemma}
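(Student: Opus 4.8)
The plan is to exploit the conditional-independence structure of the assignment mechanism by conditioning on the cluster-level draws $\{A_{kg}\}_g$ and $\{B_{kh}\}_h$. First I would invoke the law of iterated expectations, writing $E[W_{ki}W_{kj}] = E\big[E[W_{ki}W_{kj} \mid A,B]\big]$, where $A,B$ collects all the cluster-level assignment probabilities. The key observation is that, conditional on $(A,B)$, the treatment indicators are generated by thresholding independent uniforms, $W_{ki} = 1\{e_{ki} < A_{kg_{ki}} B_{kh_{ki}}\}$; since $A_{kg} B_{kh} \in [0,1]$ and $e_{ki} \sim U[0,1]$, each $W_{ki}$ is conditionally Bernoulli with $E[W_{ki}\mid A,B] = A_{kg_{ki}}B_{kh_{ki}}$, and for $i \ne j$ the indicators are \emph{conditionally independent} because $e_{ki}$ and $e_{kj}$ are independent. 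Hence $E[W_{ki}W_{kj}\mid A,B] = A_{kg_{ki}}B_{kh_{ki}} \cdot A_{kg_{kj}}B_{kh_{kj}}$.

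Second, I would take the outer expectation over $(A,B)$, using that the $A$-family is independent of the $B$-family and that distinct clusters draw independent values, together with $E[A_{kg}] = \mu_A$, $E[A_{kg}^2] = \mu_A^2 + \sigma_A^2$, and the analogous moments for $B$. The three cases differ only in which of the four cluster-level factors coincide. When $m_{ki} = m_{kj}$, both $A$-factors are the same draw and both $B$-factors are the same draw, giving $E[A_{kg}^2]\,E[B_{kh}^2]$; when $g_{ki}=g_{kj}$ but $h_{ki}\ne h_{kj}$, the $A$-factors coincide while the $B$-factors are independent, giving $E[A_{kg}^2]\,(E[B_{kh}])^2$; and when both indices differ, all four factors are independent, giving $(E[A_{kg}])^2(E[B_{kh}])^2$. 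Substituting the moments yields the three claimed expressions.

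The one point that requires care is the $m_{ki}=m_{kj}$ case: even though units $i \ne j$ share both their $G$ and $H$ clusters, and hence the same realized $A_{kg}$ and $B_{kh}$, the conditional expectation still factors into a product because the independence of the uniform draws $e_{ki}, e_{kj}$ makes $W_{ki}$ and $W_{kj}$ conditionally independent. This is precisely why the second moments $E[A_{kg}^2]$ and $E[B_{kh}^2]$ — and thus the variances $\sigma_A^2, \sigma_B^2$ — survive into the answer, rather than the assignment probability entering only linearly. The remaining work is routine factoring of expectations across independent clusters.
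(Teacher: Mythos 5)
Your proposal is correct and takes essentially the same route as the paper: condition on the cluster-level draws via the law of iterated expectations, use the conditional independence of $W_{ki}$ and $W_{kj}$ (from the independent uniforms $e_{ki}$) to factor the conditional expectation into $A_{kg_{ki}}B_{kh_{ki}}A_{kg_{kj}}B_{kh_{kj}}$, and then exploit independence across clusters and between the $A$- and $B$-families to reduce everything to first and second moments. If anything, you are more explicit than the paper on the one subtle step --- why the conditional expectation factors even when $m_{ki}=m_{kj}$ --- which the paper's proof uses without comment.
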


Multi-way sampling can be treats $R$ analogously to $W$ in Lemma \ref{lem:mwclusW}.

\subsection{Variance Estimators} \label{sec:variance_estimators}
The true variance as stated above is a function of potential residuals, so a direct plug-in is infeasible. Variance estimators hence use a feasible analog. Define:
\begin{align*}
    \eta_{ki} & :=R_{ki}\left(\frac{W_{ki}}{b_{k1}}-\frac{1-W_{ki}}{b_{k0}}\right)U_{ki} \\
    \hat{\eta}_{ki} & := R_{ki}\left(\frac{W_{ki}}{\hat{b}_{k1}}-\frac{1-W_{ki}}{\hat{b}_{k0}}\right)\hat{U}_{ki} \\
    \hat{U}_{ki} &:= Y_{ki} - \hat{\alpha}_k -\hat{\tau}_k W_{ki}
\end{align*}

Observe that $\xi_{ki} = \eta_{ki} - E[\eta_{ki}]$, and $E[\eta_{ki}] = u_{ki}(1)-u_{ki}(0)$. The common variance estimators include the heteroskedasticity-robust variance estimator attributed to \citet{eicker1967limit}, \citet{huber1967under} and \citet{white1980heteroskedasticity} (EHW), and the one-way cluster robust variance estimator attributed to \citet{liang1986longitudinal} (LZ). The CGM estimator \citep{cameron2011robust} is the sum of LZ estimators on the two different dimensions minus the LZ estimator of the intersection to avoid double-counting terms. The CGM2 estimator \citep{davezies2018asymptotic} uses the sum without subtracting the intersection so terms are double-counted. To be precise,
\begin{align*}
    \hat{V}_{EHW} &:= \frac{1}{N_k} \sum_{i=1}^{n_k} \hat{\eta}_{ki}^2 \\
    \hat{V}_{LZC} &:= \frac{1}{N_k} \sum_{i=1}^{n_k} \sum_{j \in \mathcal{N}^C_{c_{ki}}} \hat{\eta}_{ki} \hat{\eta}_{kj} \\
    \hat{V}_{CGM} &:= \frac{1}{N_k} \sum_{i=1}^{n_k} \sum_{j \in \mathcal{N}_i} \hat{\eta}_{ki} \hat{\eta}_{kj} = \hat{V}_{LZG} + \hat{V}_{LZH} - \hat{V}_{LZM} \\
    \hat{V}_{CGM2} &:= \frac{1}{N_k} \sum_{i=1}^{n_k} \sum_{j \in \mathcal{N}_i} \hat{\eta}_{ki} \hat{\eta}_{kj} + \frac{1}{N_k} \sum_{i=1}^{n_k} \sum_{j \in \mathcal{N}^M_{m_{ki}}} \hat{\eta}_{ki} \hat{\eta}_{kj} = \hat{V}_{LZG} + \hat{V}_{LZH}
\end{align*}
where, $\hat{V}_{LZG}$ and $\hat{V}_{LZH}$ denote the LZ estimator on the G and H dimensions respectively, while $\hat{V}_{LZM}$ is the LZ estimator treating each intersection $m$ as a single cluster. 

\begin{assumption} \label{asmp:eta}
    For $C, C^{\prime} \in \{ G,H \}$, let $\lambda^C_k := \sum_{i=1}^{n_k} \sum_{j \in \mathcal{N}^{C}_{c_{ki}}} E[\eta_{ki} \eta_{kj}]$. Then, $\frac{1}{\lambda^C_k} \max_c (N^{C^\prime}_c)^2 \rightarrow 0$ and $\frac{1}{\lambda^C_k} \sum_c (N^{C^\prime}_c)^2 \leq K_0$.
\end{assumption}

The conditions in Assumption \ref{asmp:eta} are required so that the asymptotic error incurred by using $\hat{V}$ relative to the true $V$ converges to zero. Since the strategy for showing such convergence is similar to \citet{yap2023general}, an analogous summability condition and a condition on the largest cluster having a negligible contribution to the variance are required.

\begin{theorem} \label{thm:var_converge}
    Under \Cref{asmp:indep}, \Cref{asmp:regularity} and \Cref{asmp:eta}, for $LZC \in \{ LZG,LZH \}$,
    \begin{align*}
        \frac{\hat{V}_{EHW}}{V_k^{EHW}} \xrightarrow{p} 1 \qquad \frac{\hat{V}_{LZC}}{V_k^{LZC}} \xrightarrow{p} 1 \qquad \frac{\hat{V}_{CGM}}{V_k^{CGM}} \xrightarrow{p} 1 \qquad \frac{\hat{V}_{CGM2}}{V_k^{CGM2}} \xrightarrow{p} 1
    \end{align*}
    where
    \begin{align*}
        V_k^{EHW} &:= \frac{1}{N_k} \sum_{i=1}^{n_k} E[\eta_{ki}^2] \\
        V_k^{LZC} &:= \frac{1}{N_k} \sum_{i=1}^{n_k} \sum_{j \in \mathcal{N}^C_{c_{ki}}} E[\eta_{ki} \eta_{kj}] \\
        V_k^{CGM} &:= \frac{1}{N_k} \sum_{i=1}^{n_k} \sum_{j \in \mathcal{N}_i} E[\eta_{ki} \eta_{kj}] \\
        V_k^{CGM2} &:= \frac{1}{N_k} \sum_{i=1}^{n_k} \sum_{j \in \mathcal{N}_i} E[\eta_{ki} \eta_{kj}] + \frac{1}{N_k} \sum_{i=1}^{n_k} \sum_{j \in \mathcal{N}^M_{m_{ki}}} E[\eta_{ki} \eta_{kj}] \\
        E[\eta_{ki} \eta_{kj}] &= E[\xi_{ki} \xi_{kj}] + E[\eta_{ki}] E[\eta_{kj}]
    \end{align*}
\end{theorem}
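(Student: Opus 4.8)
The plan is to exploit that the random normalization $1/N_k$ appears in both $\hat{V}$ and the (otherwise deterministic) object $V_k$, so it cancels in each ratio. Writing $\lambda_k^{C} := \sum_{i} \sum_{j \in \mathcal{N}^C_{c_{ki}}} E[\eta_{ki}\eta_{kj}]$ (and analogously $\lambda_k^{EHW}$, $\lambda_k^{M}$, $\lambda_k^{CGM} = \lambda_k^G + \lambda_k^H - \lambda_k^M$), the claim $\hat V_{LZC}/V_k^{LZC}\xrightarrow{p}1$ is equivalent to $\frac{1}{\lambda_k^{C}} \sum_i \sum_{j \in \mathcal{N}^C_{c_{ki}}} \hat\eta_{ki}\hat\eta_{kj} \xrightarrow{p} 1$, and similarly for EHW, LZM, and the CGM/CGM2 combinations. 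I would prove each such statement in two steps: first replace the feasible $\hat\eta_{ki}$ by the infeasible $\eta_{ki}$ at negligible cost, then prove a law of large numbers for the infeasible quadratic form by a Chebyshev argument. I would handle EHW, LZG, LZH, and LZM individually and obtain CGM, CGM2 as linear combinations.

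\textbf{Step 1 (feasible to infeasible).} I would first record that $\hat\tau_k \xrightarrow{p} \tau_k$ (consistency follows from \Cref{thm:tau_distr} once $v_k/N_k \to 0$), that $\hat\alpha_k \xrightarrow{p} \alpha_k$, and that $\hat b_{k1}\xrightarrow{p} b_{k1}$, $\hat b_{k0}\xrightarrow{p} b_{k0}$ by a law of large numbers for the indicators under \Cref{asmp:indep}. Writing $\hat\eta_{ki} = \eta_{ki} + \delta_{ki}$, the error $\delta_{ki}$ is a finite sum of terms, each the product of a common $o_p(1)$ scalar ($\hat\tau_k-\tau_k$, $\hat\alpha_k-\alpha_k$, or $1/\hat b_{kw}-1/b_{kw}$) with a quantity uniformly bounded by \Cref{asmp:regularity}(1). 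Expanding $\hat\eta_{ki}\hat\eta_{kj} - \eta_{ki}\eta_{kj} = \delta_{ki}\eta_{kj} + \eta_{ki}\delta_{kj} + \delta_{ki}\delta_{kj}$, summing over $j \in \mathcal{N}^C_{c_{ki}}$ and over $i$, and pulling out the common $o_p(1)$ factors, the aggregate error is bounded by $o_p(1)\cdot O(\sum_c (N^C_c)^2)$. Since \Cref{asmp:eta} with $C'=C$ gives $\sum_c (N^C_c)^2 \le K_0 \lambda_k^{C}$, this is $o_p(\lambda_k^{C})$, so the feasible and infeasible numerators agree after dividing by $\lambda_k^C$.

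\textbf{Step 2 (law of large numbers).} Since $E[\sum_i \sum_{j\in\mathcal{N}^C_{c_{ki}}} \eta_{ki}\eta_{kj}] = \lambda_k^{C}$ exactly, Chebyshev reduces the problem to $\mathrm{Var}(\sum_i\sum_{j \in \mathcal{N}^C_{c_{ki}}}\eta_{ki}\eta_{kj}) = o((\lambda_k^{C})^2)$. Taking $C=G$ and writing the form as $\sum_g S_g^2$ with $S_g := \sum_{i \in \mathcal{N}^G_g}\eta_{ki}$, the variance is $\sum_{g,g'} \mathrm{Cov}(S_g^2, S_{g'}^2)$, and under \Cref{asmp:indep} a cross term vanishes unless $g=g'$ or the two $G$-clusters share an $H$-cluster. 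The diagonal part is immediate: $\mathrm{Var}(S_g^2) = O((N^G_g)^4)$ by boundedness, and $\sum_g (N^G_g)^4 \le \max_g (N^G_g)^2 \cdot \sum_g (N^G_g)^2 = o(\lambda_k^G)\cdot O(\lambda_k^G) = o((\lambda_k^G)^2)$ by the max and summability parts of \Cref{asmp:eta}. The EHW case is the diagonal specialization of this argument, and LZH is symmetric.

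\textbf{Main obstacle and combination.} The step I expect to be hardest is bounding the off-diagonal covariances between \emph{distinct} $G$-clusters linked through shared $H$-clusters; these have no one-way analog and must be controlled by counting the dependent quadruples $(i,j,i',j')$ and bounding their number through the cross-dimension cluster sizes, where the $C\ne C'$ case of \Cref{asmp:eta} (so $\sum_h (N^H_h)^2 = O(\lambda_k^G)$ with $\max_h (N^H_h)^2 = o(\lambda_k^G)$) supplies the needed bound, mirroring the dependency-graph variance bound of \citet{yap2023general}. Once Steps 1 and 2 yield convergence for LZG, LZH, EHW, and the intersection estimator LZM, the CGM and CGM2 results follow from $\hat V_{CGM} = \hat V_{LZG}+\hat V_{LZH}-\hat V_{LZM}$ and $\hat V_{CGM2} = \hat V_{LZG}+\hat V_{LZH}$: each numerator equals $\lambda_k^{G}$, $\lambda_k^{H}$, or $\lambda_k^{M}$ up to $o_p$ of itself, so the combinations converge to $\lambda_k^{CGM}$ and $\lambda_k^G+\lambda_k^H$ respectively, provided $\lambda_k^{CGM}$ does not vanish relative to its constituents, which is implicit in the nondegeneracy of the limiting variance $V_k^{CGM}$.
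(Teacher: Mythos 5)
Your proposal matches the paper's proof in its essentials: both replace the feasible $\hat\eta_{ki}$ by the infeasible $\eta_{ki}$ at $o_p(1)$ relative cost, then prove each law of large numbers by a Chebyshev argument that counts the dependent quadruples and bounds them via the max-cluster and summability conditions of \Cref{asmp:regularity} and \Cref{asmp:eta} (including the cross-dimension $C \neq C'$ case for clusters linked through the other dimension, which is exactly how the paper handles the term $\sum_{s \in \mathcal{N}^H_{h(i)}}$), and both obtain CGM2 as the sum of the two LZ estimators. The only deviation --- deriving CGM from $\hat V_{LZG}+\hat V_{LZH}-\hat V_{LZM}$ rather than running the Chebyshev argument directly on the union neighborhoods $\mathcal{N}_i$ as the paper does --- is cosmetic, and the nondegeneracy caveat you correctly flag for that linear combination (that $\lambda_k^{CGM}$ not vanish relative to its constituents) is implicitly required by the paper's direct argument as well, since it too normalizes by $\lambda_k^{CGM}$ while invoking conditions stated relative to the one-way $\lambda$'s.
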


The result allows us to observe differences in variance estimators in Corollary \ref{cor:var_diff}.

\begin{corollary} \label{cor:var_diff}
The differences between the limit of the variance estimators and the true variance are as follows:
\begin{align*}
    V_{k}^{CGM}-v_{k} & =\frac{N_{k}}{n_{k}^{2}}\sum_{i}\sum_{j\in\mathcal{N}_{i}}E\left[\eta_{ki}\right]E\left[\eta_{kj}\right] \\
 V_{k}^{EHW}-v_{k} & =\frac{N_{k}}{n_{k}^{2}}\sum_{i}E\left[\eta_{ki}\right]^{2}-\frac{N_{k}}{n_{k}^{2}}\sum_{i}\sum_{j\in\mathcal{N}_{i}\backslash\left\{ i\right\} }E\left[\xi_{ki}\xi_{kj}\right] \\
 V_{k}^{LZG}-v_{k} & =-\frac{N_{k}}{n_{k}^{2}}\sum_{i}\sum_{j\in\mathcal{N}_{i}\backslash\mathcal{\mathcal{N}}_{g(i)}^{G}}E\left[\xi_{ki}\xi_{kj}\right]+\frac{N_{k}}{n_{k}^{2}}\sum_{i}\sum_{j\in\mathcal{\mathcal{N}}_{g(i)}^{G}}E\left[\eta_{ki}\right]E\left[\eta_{kj}\right] \\
 V_{k}^{CGM2}-v_{k} & =\frac{N_{k}}{n_{k}^{2}}\sum_{i}\sum_{j\in\mathcal{\mathcal{N}}_{m(i)}^{M}}E\left[\xi_{ki}\xi_{kj}\right]+\frac{N_{k}}{n_{k}^{2}}\sum_{i}\sum_{j\in\mathcal{\mathcal{N}}_{h(i)}^{H}}E\left[\eta_{ki}\right]E\left[\eta_{kj}\right]+\frac{N_{k}}{n_{k}^{2}}\sum_{i}\sum_{j\in\mathcal{\mathcal{N}}_{g(i)}^{G}}E\left[\eta_{ki}\right]E\left[\eta_{kj}\right]
\end{align*}
\end{corollary}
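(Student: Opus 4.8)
The plan is to prove each of the four identities by direct algebraic substitution, using two ingredients that are already in hand. The first is the pointwise variance decomposition $E[\eta_{ki}\eta_{kj}] = E[\xi_{ki}\xi_{kj}] + E[\eta_{ki}]E[\eta_{kj}]$ recorded at the end of \Cref{thm:var_converge}, which is immediate from $\xi_{ki} = \eta_{ki} - E[\eta_{ki}]$. The second is the set-theoretic structure of the neighborhoods, namely $\mathcal{N}_i = \mathcal{N}^G_{g(i)} \cup \mathcal{N}^H_{h(i)}$ with overlap $\mathcal{N}^G_{g(i)} \cap \mathcal{N}^H_{h(i)} = \mathcal{N}^M_{m(i)}$. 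Throughout I would write the probability limits from \Cref{thm:var_converge} and the true variance $v_k$ from \Cref{thm:tau_distr} under the common normalization $\frac{N_k}{n_k^2}$, so that each difference becomes a single double sum of $E$-terms to which the decomposition applies termwise.

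First I would dispatch the CGM case, which is cleanest and serves as the template: $V_k^{CGM}$ and $v_k$ range over the identical index set $\{(i,j): j\in\mathcal{N}_i\}$, carrying $E[\eta_{ki}\eta_{kj}]$ and $E[\xi_{ki}\xi_{kj}]$ respectively, so subtracting termwise and substituting the decomposition leaves exactly $\frac{N_k}{n_k^2}\sum_i\sum_{j\in\mathcal{N}_i}E[\eta_{ki}]E[\eta_{kj}]$. For EHW I would isolate the diagonal: since $i\in\mathcal{N}_i$, the decomposition at $i=j$ gives $E[\eta_{ki}^2]=E[\xi_{ki}^2]+E[\eta_{ki}]^2$, whereas $v_k$ carries the full $\sum_{j\in\mathcal{N}_i}E[\xi_{ki}\xi_{kj}]$; the diagonal $E[\xi_{ki}^2]$ cancels, leaving $\sum_i E[\eta_{ki}]^2$ minus the off-diagonal remainder $\sum_{j\in\mathcal{N}_i\setminus\{i\}}E[\xi_{ki}\xi_{kj}]$, as stated.

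The LZG identity is the same calculation restricted to a sub-neighborhood. I would split the sum in $v_k$ as $\sum_{j\in\mathcal{N}^G_{g(i)}} + \sum_{j\in\mathcal{N}_i\setminus\mathcal{N}^G_{g(i)}}$; on the first block the $E[\xi_{ki}\xi_{kj}]$ terms cancel against $V_k^{LZG}$ and, by the decomposition, leave the product terms $E[\eta_{ki}]E[\eta_{kj}]$ there, while the second block survives with a minus sign, reproducing the two stated summands. The CGM2 case is the only one needing an extra rewriting step: after applying the decomposition to both the $\mathcal{N}_i$ block and the $\mathcal{N}^M_{m(i)}$ block, the surviving product terms are $\sum_{j\in\mathcal{N}_i}E[\eta_{ki}]E[\eta_{kj}] + \sum_{j\in\mathcal{N}^M_{m(i)}}E[\eta_{ki}]E[\eta_{kj}]$, which I would convert into $\sum_{j\in\mathcal{N}^G_{g(i)}} + \sum_{j\in\mathcal{N}^H_{h(i)}}$ by inclusion–exclusion on $\mathcal{N}_i=\mathcal{N}^G_{g(i)}\cup\mathcal{N}^H_{h(i)}$ with overlap $\mathcal{N}^M_{m(i)}$; the covariance term $\sum_{j\in\mathcal{N}^M_{m(i)}}E[\xi_{ki}\xi_{kj}]$ remains untouched, yielding the three stated terms.

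I do not anticipate a genuine obstacle: all four statements collapse to the single decomposition plus careful bookkeeping of which index pairs each estimator includes. The one place to be careful is the inclusion–exclusion in the CGM2 step, where the intersection neighborhood $\mathcal{N}^M_{m(i)}$ must be added back exactly once so that the product terms consolidate to the clean $G$-plus-$H$ form while the covariance term on the intersection is preserved rather than cancelled; keeping the diagonal convention $i\in\mathcal{N}^C_{c(i)}$ consistent across the EHW and LZG cases is the other detail to watch.
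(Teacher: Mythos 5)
Your proposal is correct and follows essentially the same route as the paper's proof: termwise application of the decomposition $E[\eta_{ki}\eta_{kj}] = E[\xi_{ki}\xi_{kj}] + E[\eta_{ki}]E[\eta_{kj}]$ over the relevant index sets, with diagonal cancellation for EHW, the split $\mathcal{N}_i = \mathcal{N}^G_{g(i)} \cup (\mathcal{N}_i\setminus\mathcal{N}^G_{g(i)})$ for LZG, and inclusion--exclusion on $\mathcal{N}_i = \mathcal{N}^G_{g(i)}\cup\mathcal{N}^H_{h(i)}$ with overlap $\mathcal{N}^M_{m(i)}$ for CGM2. The only cosmetic difference is that for CGM2 the paper starts from the $\hat{V}_{LZG}+\hat{V}_{LZH}$ representation and applies inclusion--exclusion to the $\xi$-covariance terms, whereas you start from the $\mathcal{N}_i$-plus-$\mathcal{N}^M_{m(i)}$ representation and apply it to the product terms; the algebra is equivalent.
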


In general, we cannot guarantee that $V_k^{CGM}-v_k$ is positive. The simulation presents one example where this difference is negative. 
Nonetheless, observe that, since $u_{ki}(1) - u_{ki}(0) = E[\eta_{ki}]$, 
\begin{align*}
    u_{ki}(1) - u_{ki}(0) &= [y_{ki}(1) - (\alpha_k + \tau_k)] - [y_{ki}(0) - \alpha_k] \\
    &= y_{ki}(1) - y_{ki}(0) - \tau_k \\
    &=: \tau_{ki} - \tau_k
\end{align*}
Hence, $V_{k}^{CGM}-v_{k}$ will be positive whenever $\sum_{j \in \mathcal{N}_i} (\tau_{ki} - \tau_k) (\tau_{kj} - \tau_k) \geq 0$ i.e., when the sum of the correlation in the treatment effects for observations that are plausibly correlated is positive. This assumption is implied by the treatment effects of all units who are plausibly correlated having a positive correlation (i.e., $(\tau_{ki} - \tau_k) (\tau_{kj} - \tau_k) \geq 0$ for all $j \in \mathcal{N}_i$). However, the requirement for $V_k^{CGM} - v_k$ is weaker than this uniform assumption on the correlation, because we can have some negative correlations --- we merely require their overall sum to be weakly positive.
The assumption that $\sum_{j \in \mathcal{N}_i} (\tau_{ki} - \tau_k) (\tau_{kj} - \tau_k) \geq 0$ is reasonable in many empirical situations, as we would usually expect treatment effects of neighbors to be positively rather than negatively correlated. However, this assumption cannot be written in terms of $R_{ki}$ and $W_{ki}$ that our design-based setting has restrictions on. Hence, nothing in the design-based setting necessarily implies that CGM is conservative.

The asymptotic underestimation of the EHW estimator comes from failing to account for correlations in $\xi$, but it may be compensated by the nonzero mean of $\eta$. 
A similar interpretation is obtained in that underestimation comes from the failure to account for correlations in H that are not part of the g cluster. This underestimation may be offset by $\sum_{i}\sum_{j\in\mathcal{\mathcal{N}}_{g(i)}^{G}}E\left[\eta_{ki}\right]E\left[\eta_{kj}\right] \geq 0$, which is weakly positive because the object can be written as a sum of outer products. 

Finally, it can be shown that CGM2 is conservative. 
\begin{corollary} \label{cor:cgm2pos}
$V_{k}^{CGM2}-v_{k} \geq 0$.
\end{corollary}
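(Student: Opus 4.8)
The plan is to lean entirely on the decomposition of $V_k^{CGM2}-v_k$ already supplied by \Cref{cor:var_diff} and to show that each of its three summands is separately non-negative. Since the common factor $N_k/n_k^2$ is strictly positive, it suffices to establish non-negativity of (i) the sum of $E[\xi_{ki}\xi_{kj}]$ over pairs sharing an intersection cluster $M$, (ii) the sum of $E[\eta_{ki}]E[\eta_{kj}]$ over pairs sharing an $H$ cluster, and (iii) the analogous sum over pairs sharing a $G$ cluster.

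First I would treat the intersection term. Regrouping the double sum $\sum_i\sum_{j\in\mathcal{N}_{m(i)}^M}$ by intersection cluster recasts it as $\sum_m\sum_{i\in m}\sum_{j\in m}$, since every ordered pair lying in a common intersection $m$ is collected exactly once. Because $\xi_{ki}=\eta_{ki}-E[\eta_{ki}]$, we have $E[\xi_{ki}\xi_{kj}]=\cov{\eta_{ki},\eta_{kj}}$, so the inner double sum over each $m$ collapses to $\var{\sum_{i\in m}\eta_{ki}}$. This term therefore equals $\frac{N_k}{n_k^2}\sum_m\var{\sum_{i\in m}\eta_{ki}}$, a non-negative sum of variances.

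Next I would handle the two mean-product terms in the same spirit. Regrouping by $H$ cluster (respectively $G$ cluster), the inner double sum over each cluster $c$ factors as $\left(\sum_{i\in c}E[\eta_{ki}]\right)^2$, because each $E[\eta_{ki}]$ is a deterministic quantity (indeed $E[\eta_{ki}]=\tau_{ki}-\tau_k$). Each of the two terms is thus a non-negative weighted sum of perfect squares. Adding the three non-negative pieces yields $V_k^{CGM2}-v_k\geq 0$.

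There is no genuine obstacle here: the corollary does the heavy lifting, and the only care required is the bookkeeping that rewrites each within-cluster pairwise sum as a variance (for the $\xi$ block) or a square (for the two $\eta$-mean blocks). The conceptual point worth flagging is the contrast with CGM: CGM2 double-counts the intersection precisely in the manner that contributes the extra non-negative mass $\frac{N_k}{n_k^2}\sum_m\var{\sum_{i\in m}\eta_{ki}}$, and it is this deliberate double-counting that restores guaranteed conservativeness even though CGM itself may fail to be conservative.
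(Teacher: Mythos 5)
Your proof is correct, but it follows a genuinely different route from the paper's. The paper proves this corollary by first establishing a standalone result, \Cref{lem:cgm2pos}: for arbitrary scalar random variables $W_i$ with the multiway dependence structure, the expectation of the CGM2-type quadratic form $\sum_g\sum_{i,j\in\mathcal{N}^G_g}W_iW_j+\sum_h\sum_{i,j\in\mathcal{N}^H_h}W_iW_j$ weakly dominates $\mathrm{Var}\left(\sum_i W_i\right)$; the corollary then follows by substituting $\eta_{ki}$ for $W_i$. You instead take the decomposition of $V_k^{CGM2}-v_k$ from \Cref{cor:var_diff} as your starting point and verify that each of its three pieces is non-negative: the intersection term regroups as $\sum_m\mathrm{Var}\bigl(\sum_{i\in\mathcal{N}^M_m}\eta_{ki}\bigr)\geq 0$, and each mean-product term regroups as $\sum_c\bigl(\sum_{i\in\mathcal{N}^C_c}E[\eta_{ki}]\bigr)^2\geq 0$. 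The elementary facts doing the work are the same in both proofs --- a within-cluster double sum of covariances is the variance of the cluster sum, and a within-cluster double sum of mean products is a perfect square; the paper expresses both as outer products sandwiched between vectors of ones --- so the difference is organizational rather than substantive. Your route is shorter and more transparent, since \Cref{cor:var_diff} has already done the algebra, and it makes rigorous the paper's own informal remark that the CGM2 difference ``can be written as an additive function of one-way clustered objects.'' What the paper's route buys is a self-contained, reusable lemma about general multiway-dependent variables, stated for the random quadratic form itself rather than only for the limit objects; the mild cost of your approach is that it inherits the assumptions under which \Cref{thm:var_converge} and \Cref{cor:var_diff} identify the limits $V_k^{CGM2}$ and $v_k$, but since the corollary is a statement about exactly those limit objects, nothing is lost.
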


The CGM2 difference above is guaranteed to be positive, because it can be written as an additive function of one-way clustered objects. Its conservativeness arises from double-counting the intersection of clusters, and the nonzero mean of $\eta$. 

Consequently, in contrast to the plug-in (LZ) estimator in one-way clustering being conservative, the plug-in (CGM) estimator in two-way clustering is no longer conservative. This leaves researchers with two options: they can either assume that $\sum_{j \in \mathcal{N}_i} (\tau_{ki} - \tau_k) (\tau_{kj} - \tau_k) \geq 0$, or they can use CGM2.

\section{Simulations}
This section describes and presents results for simulations to illustrate how the theoretical results perform in practice. The simulations use the following procedure:
\begin{enumerate}
    \item Construct a population of size $n_k$. Each unit in the population is a tuple of $(g(i),h(i),Y_{ki}(1),Y_{ki}(0))$.
    \item For every simulation $s$:
    \begin{enumerate}
        \item Draw random variables $(R_{ki},W_{ki})$ for every unit and hence construct a sample of observations with $R_{ki}=1$. For every unit in the sample, $(g(i),h(i),W_{ki},Y_{ki})$ is observed.
        \item Calculate $\hat{\tau}$ from the regression.
        \item Calculate the variance estimates for various procedures and construct a confidence interval to determine if the CI covers $\tau$. 
    \end{enumerate}
    \item Aggregate the coverage over the simulations.
\end{enumerate}

Then, every design modifies some part of the procedure. The following parts can be modified:
\begin{enumerate}
    \item How clusters $(g,h)$ are assigned in the population. 
    \begin{enumerate}
        \item Balanced: Use $n_k=1e6$, with $1000$ $G$ clusters and $1000$ $H$ clusters, so there is one observation for every intersection.
        \item Staircase: Let $k_{1}$ denote an odd number, and $(g,h)$ describe a cluster intersection that is in cluster $g$ on the $G$ dimension and in cluster $h$ on the $H$ dimension. In the population, individuals can only belong to cluster intersections of the form $(k_{1},k_{1})$, $(k_{1},k\pm1)$ and $(k_{1}\pm1,k_{1})$. The mass of the population in $(k_{1},k_{1})$ is four times larger than cluster intersections of the other four forms, which are of equal mass. 
    \end{enumerate}
    \item How potential outcomes and hence $\tau_{ki}$ are constructed in the population. Use $u_i \sim N(0,0.1)$ so that $Y_{ki} = \tau_{ki}W_{ki} + u_i$.
    \begin{enumerate}
        \item Gvar: $\tau_{ki}=\tau_{g(i)} + \tau_{h(i)}$, where $\tau_g=\pm2$ with equal probability and $\tau_h=\pm1/2$ with equal probability. 
        \item Hvar: $\tau_{ki}=\tau_{g(i)} + \tau_{h(i)}$, where $\tau_h=\pm2$ with equal probability and $\tau_g=\pm1/2$ with equal probability.
        \item same: $\tau_{ki}=\tau_{g(i)} + \tau_{h(i)}$, where $\tau_g=\pm1$ with equal probability and $\tau_h=\pm1$ with equal probability.
        \item constant: $\tau_{ki}=1$. 
        \item oddeven: $\tau_{ki}=1$ if $g(i)$ and $h(i)$ are both odd, and $-1$ otherwise. 
    \end{enumerate}
    \item How $R_{ki}$'s are generated for every sample.
    \begin{enumerate}
        \item $p_k$ individual sampling probability
        \item $q_k$ cluster sampling probability for $G$
        \item Multiway Sampling: The intersection $(g,h)$ is sampled when $A^{sam}_{kg}=1$ and $B^{sam}_{kh}=1$, each independently drawn from $Be(0.25)$. Then, $p_k=0.25$.
    \end{enumerate}
    \item How $W_{ki}$'s are generated for every sample. 
    \begin{enumerate}
        \item Hway: Assignment probability $B_{kh}$ for every cluster is drawn from $U[0,1]$. $W_{ki} = 1$ with probability $B_{kh(i)}$.
        \item none: $W_{ki}=1$ with probability 1/2 so $W_{ki}$ is iid.
        \item AND: $A_{kg} \in\{0,1\}$ and $B_{kh}\in\{0,1\}$ are drawn independently, and each takes the value 1 with probability $1/\sqrt{2}$. Then, $W_{ki}=A_{kg(i)}B_{kh(i)}$.
    \end{enumerate}
\end{enumerate}

The staircase design for constructing the clusters is artificially designed so that $V_{CGM,k}-v_k\leq0$. Details are in Appendix \ref{sec:cgm_counter}. I use $nsim=5000$ simulation draws and report eight informative designs. 

\begin{table}
    \centering
    \footnotesize
    \caption{Comparison of Standard Methods for Different Designs} \label{tab:sim}
    
\begin{tabular}{lrrrrrrrrrr}
\toprule
&EHWCov & LZGCov & LZHCov & CGMCov & CGM2Cov & EHWVar & LZGVar & LZHVar & CGMVar & CGM2Var\\
\midrule
D1&0.7736 & 0.9880 & 0.9884 & 0.9990 & 0.9994 & 4e-04 & 0.0018 & 0.0018 & 0.0032 & 0.0036\\
D2&0.7836 & 0.8518 & 0.9986 & 0.9994 & 0.9996 & 8e-04 & 0.0011 & 0.0064 & 0.0067 & 0.0076\\
D3&0.3258 & 0.8802 & 0.8790 & 0.9680 & 0.9706 & 4e-04 & 0.0054 & 0.0054 & 0.0103 & 0.0107\\
D4&0.2542 & 0.9200 & 0.9336 & 0.9874 & 0.9882 & 2e-04 & 0.0050 & 0.0054 & 0.0103 & 0.0104\\
\addlinespace
D5&0.9562 & 0.9502 & 0.9540 & 0.9476 & 0.9946 & 0e+00 & 0.0000 & 0.0000 & 0.0000 & 0.0000\\
D6&0.3000 & 0.9608 & 0.9898 & 0.9988 & 0.9990 & 1e-04 & 0.0025 & 0.0040 & 0.0065 & 0.0066\\
D7&0.9902 & 1.0000 & 0.9966 & 1.0000 & 1.0000 & 9e-04 & 0.0048 & 0.0012 & 0.0051 & 0.0060\\
D8&0.2434 & 0.9568 & 0.9556 & 0.9396 & 0.9950 & 3e-04 & 0.0143 & 0.0144 & 0.0125 & 0.0287\\
\bottomrule
\end{tabular}

    \begin{tabular}{lcccc}
\toprule
  & (g,h) & $\tau$ & R & W \\
\midrule
D1 & balanced & same & $q_k=1,p_k=1$ & AND \\
D2 & balanced & Hvar & $q_k=1,p_k=1$ & AND \\
D3 & balanced & same & multiway & none \\
D4 & balanced & Hvar & $q_k=0.05,p_k=1$ & Hway \\
D5 & balanced & constant & $q_k=1,p_k=1$ & AND \\
D6 & balanced & Hvar & $q_k=0.1,p_k=1$ & none \\
D7 & balanced & Gvar & $q_k=1,p_k=1$ & Hway \\
D8 & staircase & oddeven & multiway & none \\
\bottomrule
\end{tabular} \\
    \flushleft
    Notes: The first five columns denotes the mean coverage when using the various variance estimators across $nsim=5000$ simulations for a 95\% confidence interval. The latter five columns denote the mean variance. When $q_k=p_k=1$, to reduce the size of each sample, I first draw 1\% of the generated population and fix this draw as the new population. Then, every simulation ``draw" only reassigns the treatment status of every unit in this new population. 
\end{table}

Table \ref{tab:sim} reports the results for various designs (D). D1 shows that there is massive over-coverage in two-way assignment. However, by virtue of clustered assignment, if we used the EHW standard errors, we would not have correct coverage. D2 shows how $\tau$ matters. Even though there is multiway assignment, because there is more variation in $\tau$ on the $H$ dimension, it suffices to cluster on $H$. D3 shows that when there is multiway sampling, we need to cluster on both dimensions to have valid confidence intervals --- doing so on just one dimension will under-cover. D4 reflects some necessity for considering sampling and assignment. When sampling and assignment occur on different clustering dimensions, if we had just clustered on either one, we would not have 0.95 coverage, so we need to cluster on both.

D5 is an example when CGM yields correct coverage, namely, when $\tau$ is constant. D6 shows that when there is clustered sampling on $G$ but no clustered assignment, it suffices to cluster on $G$. If we had clustered on $H$ too, the variance would have been twice as large. D7 shows that there is no need to cluster on $G$ when there is clustered assignment only on $H$. The variance is four times as large when clustering on the unnecessary dimension. D8 provides the counterexample where CGM under-covers when there is clustered sampling. The details of this construction is given in Appendix \ref{sec:cgm_counter}.

Overall, the simulations illustrate the theoretical results that sampling and assignment matter. We can get exact coverage with multiway sampling or when treatment effects are constant. CGM can under-cover in multi-way sampling, while CGM2 is always conservative. In fact, the conservativeness of CGM2 can be quite substantial: CGM2 variance is five times what is required for valid inference in D7, for instance. While D8 shows how CGM may be anticonservative, it is usually possible to rule out such designs by institutional details or some economic model. 

Design-based inference fundamentally relies on the institutional knowledge of how sampling and assignment occur. This knowledge then informs what dimensions the researcher should cluster on, if at all. When moving from one-way clustering to multi-way clustering, this paper has shown how such demands on institutional knowledge still apply. To use the plug-in variance estimator in multi-way settings, however, researchers need to make an additional assumption on the correlation of treatment effects within clusters. Without such an assumption, valid inference may require a variance estimator that is unnecessarily conservative.



\clearpage

\appendix

\section{CGM Counterexample} \label{sec:cgm_counter}

This section theoretically constructs a counterexample such that the CGM estimator is anticonservative. Due to Corollary \ref{cor:var_diff}, $V_k^{CGM} - v_k = (N_k/ n_k^2) \sum_i \sum_{j \in \mathcal{N}_i} E[\eta_{ki}] E[\eta_{kj}]$, and we have shown that $E[\eta_{ki}] = \tau_{ki} -\tau_k$. Hence, for a counterexample where $V_k^{CGM} - v_k <0$, it suffices that $\sum_i \sum_{j \in \mathcal{N}_i} (\tau_{ki}- \tau_k) (\tau_{kj} - \tau_k) <0$.

Let $k_1$ denote an odd number, and $(g,h)$ describe a cluster intersection that is in cluster $g$ on the $G$ dimension and in cluster $h$ on the $H$ dimension. Suppose there are $M$ $G$ clusters and $M$ $H$ clusters, where $M$ is even. Let $M_0$ be a fixed number.
In the population, individuals can only belong to cluster intersections of the form $(k_1,k_1)$, $(k_1, k \pm 1)$ and $(k_1 \pm 1, k_1)$. This assumption implies that there cannot be individuals in cluster intersections where both cluster indexes are even, or if their difference is more than one. 
There are $4M_0$ observations in $(k_1,k_1)$ clusters and $M_0$ observations in other cluster intersections that are nonempty. Hence, the total number of observations is $n_k = 4 M_0M$. 

\begin{table}
    \centering
    \caption{Distribution of population with bounded cluster sizes} \label{tab:tau_distr}
    \begin{tabular}{c|ccc}
Type & Proportion & (g,h) & $\tau_{ki}-\tau_k$  \\
\midrule
1 & 1/2  &  $(k_1, k_1)$  &  1 \\
2 & 1/4 & $(k_1, k_1 \pm 1)$ & -1 \\
3 & 1/4 & $(k_1 \pm 1, k_1)$ & -1 \\
\hline
\end{tabular}
\end{table}

The $\tau_{ki}-\tau_k$ values for observations belonging to the various cluster intersections are given by Table \ref{tab:tau_distr}. Then, I show how this particular construction results in $\sum_i \sum_{j \in \mathcal{N}_i} (\tau_{ki}- \tau_k) (\tau_{kj} - \tau_k) <0$.

First, consider an observation $i$ in $(k_1,k_1)$ --- they account for half the observations. Then, $|\mathcal{N}_i|=8M_0$, where $4M_0$ of those observations are in $(k_1,k_1)$ and the remaining $4M_0$ are either in $(k_1\pm1,k_1)$ or $(k_1,k_1\pm1)$. Hence, $\sum_{j \in \mathcal{N}_i} (\tau_{ki}- \tau_k) (\tau_{kj} - \tau_k) = (1) (4-1-1-1-1) =0$. 

Next, consider an observation $i$ in $(k_1,k_1\pm1)$. The treatment of $(k_1 \pm 1,k_1)$ is identical. The units in either $(k_1,k_1 \pm 1)$ or $(k_1 \pm1,k_1)$ account for the other half of the units. Here, $|\mathcal{N}_i| = 7M_0$. For instance, for some $(k_1,k_1+1)$, $4M_0$ of those are in $(k_1,k_1)$ intersections, $M_0$ are in $(k_1,k_1+1)$, $M_0$ are in $(k_1,k_1-1)$ and $M_0$ are in $(k_1+2,k_1+1)$. Then, $\sum_{j \in \mathcal{N}_i} (\tau_{ki}- \tau_k) (\tau_{kj} - \tau_k) = (1) (4-1-1-1) =-1$.

Combining these results,
\begin{align*}
    \frac{1}{N} \sum_i \sum_{j \in \mathcal{N}_i} (\tau_{ki}- \tau_k) (\tau_{kj} - \tau_k) &= \frac{1}{2} (1) + \frac{1}{2} (-1) = -1/2 <0
\end{align*}

\section{Proofs} \label{sec:proof}
\subsection{Proofs for Section \ref{sec:ols_estimator}}
\begin{proof}[Proof of \Cref{thm:tau_distr}]
Observe that $\sum_{i=1}^{n_k} u_{ki}(w)=0$. Hence,
\begin{align*}
\hat{\tau}_{k}-\tau_{k} & =\frac{1}{N_{k1}}\sum_{i=1}^{n_{k}}R_{ki}W_{ki}Y_{ki}-\frac{1}{N_{k0}}\sum_{i=1}^{n_{k}}R_{ki}\left(1-W_{ki}\right)Y_{ki}-\frac{1}{n_{k}}\sum_{i=1}^{n_{k}}\left(y_{ki}(1)-y_{ki}(0)\right)\\
 & =\frac{1}{N_{k1}}\sum_{i=1}^{n_{k}}R_{ki}W_{ki}y_{ki}(1)-\frac{1}{N_{k0}}\sum_{i=1}^{n_{k}}R_{ki}\left(1-W_{ki}\right)y_{ki}(0)-\frac{1}{n_{k}}\sum_{i=1}^{n_{k}}\left(y_{ki}(1)-y_{ki}(0)\right)\\
 & =\frac{1}{N_{k1}}\sum_{i=1}^{n_{k}}R_{ki}W_{ki}u_{ki}(1)-\frac{1}{N_{k0}}\sum_{i=1}^{n_{k}}R_{ki}\left(1-W_{ki}\right)u_{ki}(0)\\
 & =\frac{1}{N_{k1}}\sum_{i=1}^{n_{k}}R_{ki}W_{ki}u_{ki}(1)-\frac{b_{k1}}{N_{k1}}\sum_{i=1}^{n_{k}}u_{ki}(1)-\frac{1}{N_{k0}}\sum_{i=1}^{n_{k}}R_{ki}\left(1-W_{ki}\right)u_{ki}(0)+\frac{b_{k0}}{N_{k0}}\sum_{i=1}^{n_{k}}u_{ki}(0)\\
 & =\frac{1}{N_{k1}}\sum_{i=1}^{n_{k}}\left(R_{ki}W_{ki}-b_{k1}\right)u_{ki}(1)-\frac{1}{N_{k0}}\sum_{i=1}^{n_{k}}\left(R_{ki}\left(1-W_{ki}\right)-b_{k0}\right)u_{ki}(0)\\
 & =\frac{n_{k}}{N_{k1}}\frac{E\left[N_{k1}\right]}{n_{k}}\frac{1}{b_{k1}n_{k}}\sum_{i=1}^{n_{k}}\left(R_{ki}W_{ki}-b_{k1}\right)u_{ki}(1)-\frac{n_{k}}{N_{k0}}\frac{E\left[N_{k0}\right]}{n_{k}}\frac{1}{b_{k0}n_{k}}\sum_{i=1}^{n_{k}}\left(R_{ki}\left(1-W_{ki}\right)-b_{k0}\right)u_{ki}(0)\\
 & =\frac{b_{k1}}{\hat{b}_{k1}}\hat{a}_{k1}-\frac{b_{k0}}{\hat{b}_{k0}}\hat{a}_{k0}
\end{align*}

where

\begin{align*}
\hat{b}_{k1} & =\frac{N_{k1}}{n_{k}}\qquad\hat{b}_{k0}=\frac{N_{k0}}{n_{k}}\\
\hat{a}_{k1} & =\frac{1}{b_{k1}n_{k}}\sum_{i=1}^{n_{k}}\left(R_{ki}W_{ki}-b_{k1}\right)u_{ki}(1)\\
\hat{a}_{k0} & =\frac{1}{b_{k0}n_{k}}\sum_{i=1}^{n_{k}}\left(R_{ki}\left(1-W_{ki}\right)-b_{k0}\right)u_{ki}(0)
\end{align*}

Since $R_{ki},W_{ki}$ are multiway clustered and well-behaved, using the law of large numbers implied by \citet{yap2023general}, it is immediate that $\frac{b_{k1}}{\hat{b}_{k1}}=1+o_{P}(1)$ and $\frac{b_{k0}}{\hat{b}_{k0}}=1+o_{P}(1)$. Then,
\begin{align*}
\hat{\tau}_{k}-\tau_{k} & =\left(1+o_{P}(1)\right)\hat{a}_{k1}-\left(1+o_{P}(1)\right)\hat{a}_{k0}\\
 & =\hat{a}_{k1}-\hat{a}_{k0}+o_{P}(1)\left(\hat{a}_{k1}-\hat{a}_{k0}\right)
\end{align*}

It is sufficient to show that $(\sqrt{N_k}/\sqrt{v_k}) (\hat{a}_{k1}-\hat{a}_{k0}) \xrightarrow{d} N(0,1)$. If that result holds, then 
\begin{align*}
\frac{\sqrt{N_k}}{\sqrt{v_k}}(\hat{\tau}_{k}-\tau_{k}) &= \frac{\sqrt{N_k}}{\sqrt{v_k}}(\hat{a}_{k1}-\hat{a}_{k0}) +o_{P}(1) \left(\hat{a}_{k1}-\hat{a}_{k0}\right)\frac{\sqrt{N_k}}{\sqrt{v_k}} \\
&= \frac{\sqrt{N_k}}{\sqrt{v_k}}(\hat{a}_{k1}-\hat{a}_{k0}) +o_{P}(1) O_P(1) \xrightarrow{d} N(0,1)
\end{align*}

Hence, 
\begin{align*}
 \hat{a}_{k1}-\hat{a}_{k0} & =\frac{1}{n_{k}}\sum_{i=1}^{n_{k}}\left(\frac{1}{b_{k1}}\left(R_{ki}W_{ki}-b_{k1}\right)u_{ki}(1)-\frac{1}{b_{k0}}\left(R_{ki}\left(1-W_{ki}\right)-b_{k0}\right)u_{ki}(0)\right)\\
 & =\frac{1}{n_{k}}\sum_{i=1}^{n_{k}}\xi_{ki} = \hat{\tau}_{k}-\tau_{k}
\end{align*}

It is obvious that $E[\xi_{ki}] =0$. Since the conditions of the central limit theorem (CLT) of \citet{yap2023general} are satisfied,
\begin{align*}
    \frac{\sum_{i=1}^{n_{k}}\xi_{ki}}{\sqrt {\sum_{i=1}^{n_k} \sum_{j \in \mathcal{N}_i} E[\xi_{ki} \xi_{kj}]}} \xrightarrow{d} N(0,1)
\end{align*}

The expression for $v_k$ is obtained by solving:
\begin{align*}
    \frac{1}{n_k^2} \sum_{i=1}^{n_k} \sum_{j \in \mathcal{N}_i} E[\xi_{ki} \xi_{kj}] = \frac{v_k}{N_k}
\end{align*}

Finally, it remains to find an expression for $E[\xi_{ki} \xi_{kj}]$. 
\footnotesize
\begin{align*}
E\left[\xi_{ki}\xi_{kj}\right] & =E [\left(\frac{1}{b_{k1}}\left(R_{ki}W_{ki}-b_{k1}\right)u_{ki}(1)-\frac{1}{b_{k0}}\left(R_{ki}\left(1-W_{ki}\right)-b_{k0}\right)u_{ki}(0)\right) \\
&\qquad \left(\frac{1}{b_{k1}}\left(R_{kj}W_{kj}-b_{k1}\right)u_{kj}(1)-\frac{1}{b_{k0}}\left(R_{kj}\left(1-W_{kj}\right)-b_{k0}\right)u_{kj}(0)\right)]\\
 & =E\left[\frac{1}{b_{k1}^{2}}\left(R_{ki}W_{ki}-b_{k1}\right)u_{ki}(1)\left(R_{kj}W_{kj}-b_{k1}\right)u_{kj}(1)-\frac{1}{b_{k1}b_{k0}}\left(R_{ki}W_{ki}-b_{k1}\right)u_{ki}(1)\left(R_{kj}\left(1-W_{kj}\right)-b_{k0}\right)u_{kj}(0)\right]\\
 & \qquad+E\left[\frac{1}{b_{k0}^{2}}\left(R_{ki}\left(1-W_{ki}\right)-b_{k0}\right)u_{ki}(0)\left(R_{kj}\left(1-W_{kj}\right)- b_{k0}\right)u_{kj}(0) \right] \\
 &\qquad - E\left[\frac{1}{b_{k1}b_{k0}}\left(R_{kj}W_{kj}-b_{k1}\right)u_{kj}(1)u_{ki}(1)\left(R_{ki}\left(1-W_{ki}\right)-b_{k0}\right)u_{ki}(0)\right]
\end{align*}
\normalsize

Expand the individual terms. 
\begin{align*}
E&\left[\frac{1}{b_{k1}^{2}}\left(R_{ki}W_{ki}-b_{k1}\right)u_{ki}(1)\left(R_{kj}W_{kj}-b_{k1}\right)u_{kj}(1)\right] \\
&=E\left[\frac{1}{b_{k1}^{2}}\left(R_{ki}W_{ki}R_{kj}W_{kj}-b_{k1}R_{kj}W_{kj}-b_{k1}R_{ki}W_{ki}+b_{k1}^{2}\right)u_{ki}(1)u_{kj}(1)\right]\\
 & =E\left[\frac{1}{b_{k1}^{2}}\left(R_{ki}W_{ki}R_{kj}W_{kj}\right)u_{ki}(1)u_{kj}(1)\right]-2\frac{E\left[R_{ki}W_{ki}\right]b_{k1}}{b_{k1}^{2}}u_{ki}(1)u_{kj}(1)+u_{ki}(1)u_{kj}(1)\\
 & =\frac{1}{b_{k1}^{2}}E\left[R_{ki}W_{ki}R_{kj}W_{kj}\right]u_{ki}(1)u_{kj}(1)-u_{ki}(1)u_{kj}(1)\\
 & =\left(\frac{1}{b_{k1}^{2}}E\left[R_{ki}W_{ki}R_{kj}W_{kj}\right]-1\right)u_{ki}(1)u_{kj}(1)
\end{align*}

Since sampling and assignment are independent,
\[
E\left[R_{ki}W_{ki}R_{kj}W_{kj}\right]=E\left[R_{ki}R_{kj}\right]E\left[W_{ki}W_{kj}\right]
\]

The remaining terms are:
\begin{align*}
E&\left[\frac{1}{b_{k1}b_{k0}}\left(R_{ki}W_{ki}-b_{k1}\right)u_{ki}(1)\left(R_{kj}\left(1-W_{kj}\right)-b_{k0}\right)u_{kj}(0)\right] \\
& =\frac{1}{b_{k1}b_{k0}}E\left[R_{ki}W_{ki}R_{kj}\left(1-W_{kj}\right)-b_{k1}b_{k0}\right]u_{ki}(1)u_{kj}(0)\\
 & =\left(\frac{1}{b_{k1}b_{k0}}E\left[R_{ki}W_{ki}R_{kj}\left(1-W_{kj}\right)\right]-1\right)u_{ki}(1)u_{kj}(0)
\end{align*}

\begin{align*}
E&\left[\frac{1}{b_{k0}^{2}}\left(R_{ki}\left(1-W_{ki}\right)-b_{k0}\right)u_{ki}(0)\left(R_{kj}\left(1-W_{kj}\right)-b_{k0}\right)u_{kj}(0)\right] \\
& =\frac{1}{b_{k0}^{2}}E\left[\left(R_{ki}\left(1-W_{ki}\right)-b_{k0}\right)\left(R_{kj}\left(1-W_{kj}\right)-b_{k0}\right)\right]u_{ki}(0)u_{kj}(0)\\
 & =\frac{1}{b_{k0}^{2}}E\left[\left(R_{ki}\left(1-W_{ki}\right)R_{kj}\left(1-W_{kj}\right)\right)-b_{k0}^{2}\right]u_{ki}(0)u_{kj}(0)\\
 & =\left(\frac{1}{b_{k0}^{2}}E\left[R_{ki}\left(1-W_{ki}\right)R_{kj}\left(1-W_{kj}\right)\right]-1\right)u_{ki}(0)u_{kj}(0)
\end{align*}

with
\begin{align*}
E\left[R_{ki}W_{ki}R_{kj}\left(1-W_{kj}\right)\right] & =E\left[R_{ki}R_{kj}\right]E\left[W_{ki}\left(1-W_{kj}\right)\right]\\
E\left[R_{ki}\left(1-W_{ki}\right)R_{kj}\left(1-W_{kj}\right)\right] & =E\left[R_{ki}R_{kj}\right]E\left[\left(1-W_{ki}\right)\left(1-W_{kj}\right)\right]
\end{align*}
The result is then obtained by plugging in these expressions.
\end{proof}

\begin{proof}[Proof of \Cref{lem:diffRW}]
If $i$ and $j$ do not share a cluster on the respective dimensions, $E\left[R_{ki}R_{kj}\right]=E[R_{ki}]E[R_{kj}]=p_{k}^{2}q_{k}^{2}$, and $E\left[W_{ki}W_{kj}\right]=\mu^{2}$. If they share the assignment dimension, then
\begin{align*}
E\left[W_{ki}W_{kj}\right] & =E\left[E\left[W_{ki}W_{kj}|A_{h(i)}\right]\right]\\
 & =E\left[A_{h}^{2}\right]\\
 & =\sigma^{2}+\mu^{2}
\end{align*}
All other derivations are analogous.
\end{proof}

\begin{proof}[Proof of \Cref{lem:mwclusW}]
\begin{align*}
E\left[W_{ki}W_{kj}\right] & =E\left[E\left[W_{ki}W_{kj}|A_{g},B_{h}\right]\right]\\
 & =E\left[A_{g}^{2}B_{h}^{2}\right]\\
 & =E\left[A_{g}^{2}\right]E\left[B_{h}^{2}\right]\\
 & =\left(\mu_{A}^{2}+\sigma_{A}^{2}\right)\left(\mu_{B}^{2}+\sigma_{B}^{2}\right)
\end{align*}
Further, 
\begin{align*}
E\left[W_{ki}W_{kj}\right] & =E\left[E\left[W_{ki}W_{kj}|A_{g},B_{h},B_{h^{\prime}},g(i)=g(j)=g,h(i)=h,h(j)=h^{\prime}\right]\right]\\
 & =E\left[A_{g}^{2}B_{h}B_{h^{\prime}}\right]\\
 & =E\left[A_{g}^{2}\right]E\left[B_{h}\right]^{2}\\
 & =\left(\mu_{A}^{2}+\sigma_{A}^{2}\right)\mu_{B}^{2}
\end{align*}
\end{proof}

\subsection{Proofs for Section \ref{sec:variance_estimators}}

\begin{proof} [Proof of \Cref{thm:var_converge}]
Due to \Cref{thm:tau_distr}, and applying a similar argument to show that $\hat{\alpha}_k/{\alpha_k} \xrightarrow{p} 1$, $\hat{U}_{ki} = U_{ki}(1+o_P(1))$. Hence, $\hat{\eta}_{ki} = \eta_{ki} (1+o_P(1))$, so it suffices to show the result for $\eta_{ki}$ directly. 

First consider the EHW result. We want to show that:
\begin{align*}
P&\left(\frac{1}{\sum_i E[\eta_{ki}^2]}\sum_{i} \left(\eta_{ki}^2  -E\left[\eta_{ki}^2\right]\right)>\epsilon \right)  \leq \frac{1}{\epsilon^{2}}\frac{1}{\left(\sum_i E[\eta_{ki}^2]\right)^{2}} E\left[\left(\sum_{i} \left(\eta_{ki}^2  -E\left[\eta_{ki}^2\right]\right)\right)^{2}\right]\\
 & = \frac{1}{\epsilon^{2}}\frac{1}{\left(\sum_i E[\eta_{ki}^2]\right)^{2}}   \sum_i \sum_{j} \left(E\left[ \eta_{ki}^2 \eta_{kj}^2 \right] -E\left[\eta_{ki}^2\right]E\left[ \eta_{kj}^2 \right]\right) \rightarrow 0
\end{align*}

Let $D_{ij}$ denote dependency between $i$ and $j$, so these units are correlated. Under our regularity conditions, for some arbitrary constant $K$, the probability above is:
\begin{align*}
    \frac{K}{n_k^2} \sum_i \sum_{j \in \mathcal{N}_i} \left(E\left[ \eta_{ki}^2 \eta_{kj}^2 \right] -E\left[\eta_{ki}^2\right]E\left[ \eta_{kj}^2 \right]\right) &= \frac{K}{n_k^2} \sum_i \sum_{j \in \mathcal{N}_i} D_{ij} \\
    &\leq \frac{K}{n_k^2} \left( \sum_g (N^G_g)^2 + \sum_h (N^H_h)^2 \right) 
\end{align*}
To obtain convergence, simply observe that:
\begin{align*}
    \frac{1}{n_k^2} \sum_g (N^G_g)^2 \leq \frac{\max_g N^G_g}{n_k} \sum_g (N^G_g) \frac{1}{n_k^{3/2}} = o(1) \frac{n_k}{n_k} = o(1) 
\end{align*}

Turning to the LZ estimator, we have $\lambda_{k}^{G} = \sum_i \sum_{j \in \mathcal{N}^G_{g_{ki}}} E[\eta_{ki} \eta_{kj}]$. We want to show that the following is $o_P(1)$, even under multi-way correlation.
\begin{align*}
P&\left(\frac{1}{\lambda_{n}^{G}}\sum_{i}\sum_{j\in\mathcal{N}_{g(i)}^{G}}\left(\eta_{ki}\eta_{kj}-E\left[\eta_{ki}\eta_{kj}\right]\right)>\epsilon\right)  \leq\frac{1}{\epsilon^{2}}\frac{1}{\left(\lambda_{n}^{G}\right)^{2}}E\left[\left(\sum_{i}\sum_{j\in\mathcal{N}_{g(i)}^{G}}\left(\eta_{ki}\eta_{kj}-E\left[\eta_{ki}\eta_{kj}\right]\right)\right)^{2}\right]\\
 & =\frac{1}{\epsilon^{2}}\frac{1}{\left(\lambda_{n}^{G}\right)^{2}}\sum_{i}\sum_{j\in\mathcal{N}_{g(i)}^{G}}\sum_{s}\sum_{l\in\mathcal{N}_{g(k)}^{G}}\left(E\left[\eta_{ki}\eta_{kj}\eta_{ks}\eta_{kl}\right]-E\left[\eta_{ki}\eta_{kj}\right]E\left[\eta_{ks}\eta_{kl}\right]\right)
\end{align*}

$E\left[\eta_{ki}\eta_{kj}\eta_{ks}\eta_{kl}\right]=E\left[\eta_{ki}\eta_{kj}\right]E\left[\eta_{ks}\eta_{kl}\right]$ whenever $(i,j)$ and $(s,l)$ do not share any cluster, even if $i,j$ share a cluster with each other. Hence, we will need to count the moment when $D_{is}=1$ or $D_{il}=1$ or $D_{js}=1$ or $D_{jl}=1$. Since moments are bounded by \Cref{asmp:regularity}, 
\[
P\left(\frac{1}{\lambda_{n}^{G}}\sum_{i}\sum_{j\in\mathcal{N}_{g(i)}^{G}}\left(\eta_{ki}\eta_{kj}-E\left[\eta_{ki}\eta_{kj}\right]\right)>\epsilon\right)\leq\frac{K_0}{\epsilon^{2}}\frac{1}{\left(\lambda_{n}^{G}\right)^{2}}\sum_{i}\sum_{j\in\mathcal{N}_{g(i)}^{G}}\sum_{s}\sum_{l\in\mathcal{N}_{g(s)}^{G}}\left(D_{is}+D_{il}+D_{js}+D_{jl}\right)
\]

It suffices to consider the sum on $D_{is}$ because everything else is analogous.

\begin{align*}
\frac{1}{\left(\lambda_{n}^{G}\right)^{2}}\sum_{i}\sum_{j\in\mathcal{N}_{g(i)}^{G}}\sum_{s}\sum_{l\in\mathcal{N}_{g(s)}^{G}}D_{is} & \leq\frac{1}{\left(\lambda_{n}^{G}\right)^{2}}\sum_{i}\sum_{j\in\mathcal{N}_{g(i)}^{G}}\left(\sum_{s\in\mathcal{N}_{g(i)}^{G}}+\sum_{s\in\mathcal{N}_{h(i)}^{H}}\right)\sum_{l\in\mathcal{N}_{g(s)}^{G}}D_{is}\\
 & \leq\frac{1}{\left(\lambda_{n}^{G}\right)^{2}}\sum_{i}\sum_{j\in\mathcal{N}_{g(i)}^{G}}\left(\sum_{s\in\mathcal{N}_{g(i)}^{G}}+\sum_{s\in\mathcal{N}_{h(i)}^{H}}\right)\max_{g}N_{g}^{G}\\
 & =\frac{1}{\left(\lambda_{n}^{G}\right)^{2}}\sum_{i}\sum_{j\in\mathcal{N}_{g(i)}^{G}}\left(N_{g(i)}^{G}+N_{h(i)}^{H}\right)\max_{g}N_{g}^{G}
\end{align*}

Taking the first term,
\begin{align*}
\frac{1}{\left(\lambda_{n}^{G}\right)^{2}}\sum_{i}\sum_{j\in\mathcal{N}_{g(i)}^{G}}N_{g(i)}^{G}\max_{g}N_{g}^{G} & =\frac{\max_{g}N_{g}^{G}}{\left(\lambda_{n}^{G}\right)^{2}}\sum_{i}\left(N_{g(i)}^{G}\right)^{2}\\
 & =\frac{\max_{g}N_{g}^{G}}{\left(\lambda_{n}^{G}\right)^{2}}\sum_{g}\left(N_{g}^{G}\right)^{3}
\end{align*}

Under \Cref{asmp:eta}, $\frac{\max_{g}\left(N_{g}^{G}\right)^{2}}{\lambda_{n}^{G}}\rightarrow0$,
and $\frac{1}{\lambda_{n}}\sum_{g}\left(N_{g}^{G}\right)^{2}$ is
finite, so

\[
\frac{\max_{g}N_{g}^{G}}{\left(\lambda_{n}^{G}\right)^{2}}\sum_{g}\left(N_{g}^{G}\right)^{3}\leq\frac{\max_{g}\left(N_{g}^{G}\right)^{2}}{\lambda_{n}^{G}}\frac{1}{\lambda_{n}^{G}}\sum_{g}\left(N_{g}^{G}\right)^{2}\rightarrow0
\]

For the other term,
\begin{align*}
\frac{1}{\left(\lambda_{n}^{G}\right)^{2}}\sum_{i}\sum_{j\in\mathcal{N}_{g(i)}^{G}}N_{h(i)}^{H}\max_{g}N_{g}^{G} & =\frac{\max_{g}N_{g}^{G}}{\left(\lambda_{n}^{G}\right)^{2}}\sum_{i}N_{g(i)}^{G}N_{h(i)}^{H}\\
 & \leq\frac{\max_{g}N_{g}^{G}\max_{h}N_{h}^{H}}{\lambda_{n}^{G}}\frac{1}{\lambda_{n}^{G}}\sum_{i}N_{g(i)}^{G}\\
 & =\frac{\max_{g}N_{g}^{G}\max_{h}N_{h}^{H}}{\lambda_{n}^{G}}\frac{1}{\lambda_{n}^{G}}\sum_{g}\left(N_{g}^{G}\right)^{2}\rightarrow0
\end{align*}

Convergence in the last line occurs due to $\max_g N^G_g/ (\lambda_n^G)^{1/2} = (\max_g (N^G_g)^2/ \lambda_n^G)^{1/2} = o(1)$. 

A similar argument works for the CGM estimator. We now want to show that the following is $o(1)$:
\begin{align*}
    \frac{K_0}{\epsilon^{2}}\frac{1}{\left( \sum_{i}\sum_{j \in \mathcal{N}_i} E[\eta_{ki} \eta_{kj}]\right)^{2}} \sum_{i}\sum_{j \in \mathcal{N}_i} \sum_{r}\sum_{l\in\mathcal{N}_s}\left(D_{is}+D_{il}+D_{js}+D_{jl}\right)
\end{align*}

As before, it suffices to consider $\sum_{i}\sum_{j \in \mathcal{N}_i} \sum_{s}\sum_{l\in\mathcal{N}_r} D_{is}$. Let $\lambda_k := \sum_{i}\sum_{j \in \mathcal{N}_i} E[\eta_{ki} \eta_{kj}]$.
\begin{align*}
    \frac{1}{\lambda_k^{2}} &\sum_{i}\sum_{j \in \mathcal{N}_i} \sum_{s}\sum_{l\in\mathcal{N}_r} D_{is} \leq \frac{1}{\lambda_k^{2}} \sum_{i}\sum_{j\in\mathcal{N}_i}\left(N_{g(i)}^{G}+N_{h(i)}^{H}\right) \left( \max_{g}N_{g}^{G} + \max_h N_h^H \right) \\
    &\leq \frac{1}{\lambda_k^{2}} \sum_{i} \left(N_{g(i)}^{G}+N_{h(i)}^{H}\right)^2 \left( \max_{g}N_{g}^{G} + \max_h N_h^H \right)
\end{align*}
It suffices to consider the leading term as the other terms are analogous. Due to \Cref{asmp:regularity},
\begin{align*}
    \frac{\max_{g}N_{g}^{G}}{\lambda_k^{2}} \sum_{i} (N_{g(i)}^{G})^2  = \frac{\max_{g}(N_{g}^{G}) ^2}{\lambda_k}  \frac{1}{\lambda_k}\sum_{g} (N_{g}^{G})^2 = o(1) O(1) = o(1)
\end{align*}

Hence, we can get asymptotic convergence to the expectation.

Since the CGM2 estimator is built from LZ estimators on the G and H dimensions, the established convergence of the LZ estimators implies the convergence of CGM2. 
\end{proof}

\begin{proof}[Proof of \Cref{cor:var_diff}]
Observe that:
\begin{align*}
    V_{k}^{CGM}-v_{k} & =\frac{N_{k}}{n_{k}^{2}}\sum_{i}\sum_{j\in\mathcal{N}_{i}}E\left[\eta_{ki}\right]E\left[\eta_{kj}\right]\\
 & =\frac{N_{k}}{n_{k}^{2}}\sum_{i}\sum_{j\in\mathcal{N}_{i}}\left(u_{ki}(1)-u_{ki}(0)\right)\left(u_{kj}(1)-u_{kj}(0)\right) 
 \end{align*}
 \begin{align*}
 V_{k}^{EHW}-v_{k} & =\frac{N_{k}}{n_{k}^{2}}\sum_{i}E\left[\eta_{ki}^{2}\right]+\frac{N_{k}}{n_{k}^{2}}\sum_{i}\sum_{j\in\mathcal{N}_{i}}E\left[\xi_{ki}\xi_{kj}\right]\\
 & =\frac{N_{k}}{n_{k}^{2}}\sum_{i}E\left[\xi_{ki}^{2}\right]+\frac{N_{k}}{n_{k}^{2}}\sum_{i}E\left[\eta_{ki}\right]^{2}-\frac{N_{k}}{n_{k}^{2}}\sum_{i}\sum_{j\in\mathcal{N}_{i}}E\left[\xi_{ki}\xi_{kj}\right]\\
 & =\frac{N_{k}}{n_{k}^{2}}\sum_{i}E\left[\eta_{ki}\right]^{2}-\frac{N_{k}}{n_{k}^{2}}\sum_{i}\sum_{j\in\mathcal{N}_{i}\backslash\left\{ i\right\} }E\left[\xi_{ki}\xi_{kj}\right] 
 \end{align*}
 \begin{align*}
 V_{k}^{LZG}-v_{k} & =\frac{N_{k}}{n_{k}^{2}}\sum_{i}\sum_{j\in\mathcal{\mathcal{N}}_{g(i)}^{G}}E\left[\eta_{ki}\eta_{kj}\right]-\frac{N_{k}}{n_{k}^{2}}\sum_{i}\sum_{j\in\mathcal{N}_{i}}E\left[\xi_{ki}\xi_{kj}\right]\\
 & =\frac{N_{k}}{n_{k}^{2}}\sum_{i}\sum_{j\in\mathcal{\mathcal{N}}_{g(i)}^{G}}E\left[\xi_{ki}\xi_{kj}\right]+\frac{N_{k}}{n_{k}^{2}}\sum_{i}\sum_{j\in\mathcal{\mathcal{N}}_{g(i)}^{G}}E\left[\eta_{ki}\right]E\left[\eta_{kj}\right]-\frac{N_{k}}{n_{k}^{2}}\sum_{i}\sum_{j\in\mathcal{N}_{i}}E\left[\xi_{ki}\xi_{kj}\right]\\
 & =-\frac{N_{k}}{n_{k}^{2}}\sum_{i}\sum_{j\in\mathcal{N}_{i}\backslash\mathcal{\mathcal{N}}_{g(i)}^{G}}E\left[\xi_{ki}\xi_{kj}\right]+\frac{N_{k}}{n_{k}^{2}}\sum_{i}\sum_{j\in\mathcal{\mathcal{N}}_{g(i)}^{G}}E\left[\eta_{ki}\right]E\left[\eta_{kj}\right]
 \end{align*}
 \begin{align*}
 V_{k}^{CGM2}-v_{k} & =\frac{N_{k}}{n_{k}^{2}}\sum_{i}\sum_{j\in\mathcal{\mathcal{N}}_{g(i)}^{G}}E\left[\xi_{ki}\xi_{kj}\right]+\frac{N_{k}}{n_{k}^{2}}\sum_{i}\sum_{j\in\mathcal{\mathcal{N}}_{g(i)}^{G}}E\left[\eta_{ki}\right]E\left[\eta_{kj}\right]-\frac{N_{k}}{n_{k}^{2}}\sum_{i}\sum_{j\in\mathcal{N}_{i}}E\left[\xi_{ki}\xi_{kj}\right]\\
 & \qquad+\frac{N_{k}}{n_{k}^{2}}\sum_{i}\sum_{j\in\mathcal{\mathcal{N}}_{h(i)}^{H}}E\left[\xi_{ki}\xi_{kj}\right]+\frac{N_{k}}{n_{k}^{2}}\sum_{i}\sum_{j\in\mathcal{\mathcal{N}}_{h(i)}^{H}}E\left[\eta_{ki}\right]E\left[\eta_{kj}\right]\\
 & =\frac{N_{k}}{n_{k}^{2}}\sum_{i}\sum_{j\in\mathcal{\mathcal{N}}_{m(i)}^{M}}E\left[\xi_{ki}\xi_{kj}\right]+\frac{N_{k}}{n_{k}^{2}}\sum_{i}\sum_{j\in\mathcal{\mathcal{N}}_{h(i)}^{H}}E\left[\eta_{ki}\right]E\left[\eta_{kj}\right]+\frac{N_{k}}{n_{k}^{2}}\sum_{i}\sum_{j\in\mathcal{\mathcal{N}}_{g(i)}^{G}}E\left[\eta_{ki}\right]E\left[\eta_{kj}\right]\geq0
\end{align*}

\end{proof}

To prove Corollary \ref{cor:cgm2pos}, I first prove Lemma \ref{lem:cgm2pos}.

\begin{lemma} \label{lem:cgm2pos}
Consider any scalar random variable $W_{i} \in \mathbb{R}$ and define $Q_n := Var(\sum_i W_i) = \sum_i \sum_{j \in \mathcal{N}_i} E [(W_i - E[W_i]) (W_j - E[W_j])]$ and $\hat{V}^w_{CGM2} := \sum_g \sum_{i,j \in \mathcal{N}^G_g} W_i W_j + \sum_h \sum_{i,j \in \mathcal{N}^H_h} W_i W_j$. Suppose $W_i \indep W_j$ when $g(i) \ne g(j)$ and $h(i) \ne h(j)$, and all the second moments exist. Then, $E[Q_n^{-1} \hat{V}^w_{CGM2}] \geq 1$.
\end{lemma}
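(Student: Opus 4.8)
The plan is to exploit the fact that $Q_n$ is a deterministic constant — it is the variance of the fixed sum $\sum_i W_i$ — so that the random object $Q_n^{-1}\hat{V}^w_{CGM2}$ has expectation $Q_n^{-1}E[\hat{V}^w_{CGM2}]$. Assuming $Q_n>0$ (implicit in the statement, since otherwise $Q_n^{-1}$ is undefined), the claim $E[Q_n^{-1}\hat{V}^w_{CGM2}]\geq 1$ is therefore equivalent to the purely scalar inequality $E[\hat{V}^w_{CGM2}]\geq Q_n$. I would prove this by computing $E[\hat{V}^w_{CGM2}]$ explicitly and comparing it with $Q_n$ term by term.

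Writing $\mu_i:=E[W_i]$ and decomposing each cross moment as $E[W_iW_j]=\mathrm{Cov}(W_i,W_j)+\mu_i\mu_j$, the expected estimator splits into a covariance part and a mean part:
\[
E[\hat{V}^w_{CGM2}] = \underbrace{\sum_g \sum_{i,j \in \mathcal{N}^G_g} \mathrm{Cov}(W_i,W_j) + \sum_h \sum_{i,j \in \mathcal{N}^H_h} \mathrm{Cov}(W_i,W_j)}_{\text{covariance part}} + \underbrace{\sum_g \sum_{i,j \in \mathcal{N}^G_g} \mu_i \mu_j + \sum_h \sum_{i,j \in \mathcal{N}^H_h} \mu_i \mu_j}_{\text{mean part}}.
\]
For $Q_n$ I would start from the identity already supplied in the statement, $Q_n=\sum_i\sum_{j\in\mathcal{N}_i}\mathrm{Cov}(W_i,W_j)$ (valid precisely because the independence hypothesis kills every covariance with $g(i)\neq g(j)$ and $h(i)\neq h(j)$), and apply the inclusion–exclusion identity $\sum_{j\in\mathcal{N}_i}=\sum_{j\in\mathcal{N}^G_{g(i)}}+\sum_{j\in\mathcal{N}^H_{h(i)}}-\sum_{j\in\mathcal{N}^M_{m(i)}}$ to obtain
\[
Q_n = \sum_g \sum_{i,j \in \mathcal{N}^G_g} \mathrm{Cov}(W_i,W_j) + \sum_h \sum_{i,j \in \mathcal{N}^H_h} \mathrm{Cov}(W_i,W_j) - \sum_m \sum_{i,j \in \mathcal{N}^M_m} \mathrm{Cov}(W_i,W_j).
\]

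Subtracting, the one-way covariance sums cancel against the covariance part of $E[\hat{V}^w_{CGM2}]$, and what survives is
\[
E[\hat{V}^w_{CGM2}] - Q_n = \sum_m \sum_{i,j \in \mathcal{N}^M_m} \mathrm{Cov}(W_i,W_j) + \sum_g \Big(\sum_{i \in \mathcal{N}^G_g} \mu_i\Big)^2 + \sum_h \Big(\sum_{i \in \mathcal{N}^H_h} \mu_i\Big)^2,
\]
where I have used $\sum_{i,j\in\mathcal{N}^G_g}\mu_i\mu_j=(\sum_{i\in\mathcal{N}^G_g}\mu_i)^2$ and likewise on the $H$ dimension. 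The conclusion then follows because each of the three terms is nonnegative: the last two are manifest sums of squares, and the first is $\sum_m \mathrm{Var}\big(\sum_{i\in\mathcal{N}^M_m}W_i\big)\geq 0$, since the inner double sum over a single intersection is exactly the variance of the cluster-$m$ total.

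There is no genuine analytic difficulty here; the routine content is the inclusion–exclusion bookkeeping over $\mathcal{N}_i$ and the observation that off-block covariances vanish by independence. The one conceptual step I would state carefully is that the intersection term, which CGM2 declines to subtract (in contrast to CGM), reappears not as an arbitrary-signed covariance but as a sum of honest variances — one per nonempty intersection. This is exactly where the ``double-counting'' of the intersection purchases conservativeness, and it is the crux that makes $E[\hat{V}^w_{CGM2}]-Q_n\geq 0$ hold unconditionally, without any sign assumption on the means $\mu_i$.
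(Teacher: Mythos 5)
Your proof is correct and follows essentially the same route as the paper's: both reduce the claim to $E[\hat{V}^w_{CGM2}] - Q_n \geq 0$ (using that $Q_n$ is nonrandom), apply inclusion--exclusion across the $G$, $H$, and intersection $M$ sums, and conclude because the surviving terms are squares of within-cluster mean sums plus the intersection covariance sum $\sum_m \mathrm{Var}\bigl(\sum_{i \in \mathcal{N}^M_m} W_i\bigr) \geq 0$. The paper phrases the nonnegative mean terms as outer products $1_{N^G_g}' E[W_g]E[W_g]' 1_{N^G_g}$ rather than explicit squares, but this is the identical argument.
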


\begin{proof} [Proof of Lemma \ref{lem:cgm2pos}]
Observe that:
 \begin{align*}
    \hat{V}^w_{CGM2} &:= \sum_g \sum_{i,j \in \mathcal{N}^G_g} W_i W_j + \sum_h \sum_{i,j \in \mathcal{N}^H_h} W_i W_j \\
    &= \sum_i \sum_{j \in \mathcal{N}_i} W_i W_j + \sum_{m} \sum_{i,j \in \mathcal{N}^{M}_{m}} W_i W_j
\end{align*}

Define:
\begin{align*}
    Q^*_n := \sum_i \sum_{j \in \mathcal{N}_i} E[W_i] E[W_j]
\end{align*}


Then,
\small
\begin{align*}
    E[Q_n^{-1} \hat{V}^w_{CGM2}] &= Q_n^{-1} E\left( \sum_i \sum_{j \in \mathcal{N}_i} (W_i - E[W_i]) (W_j - E[W_j]) +  \sum_i \sum_{j \in \mathcal{N}_i} (2W_i E[W_j] -  E[W_i] E[W_j]) + \sum_{m} \sum_{i,j \in \mathcal{N}^{M}_{m}} W_i W_j \right) \\
    &= 1  + Q_n^{-1} E\left(Q^*_n  + \sum_{m} \sum_{i,j \in \mathcal{N}^{M}_{m}} W_i W_j \right)
\end{align*}

\normalsize
Then, it remains to show that $ \left(Q^*_n  + \sum_{m} \sum_{i,j \in \mathcal{N}^{M}_{m}} E[W_i W_j] \right)$ is weakly positive. 
\begin{align*}
    Q^*_n &+ \sum_{m} \sum_{i,j \in \mathcal{N}^{M}_{m}} E[W_i W_j] = \sum_i \sum_{j \in \mathcal{N}_i} E[W_i] E[W_j] + \sum_{m} \sum_{i,j \in \mathcal{N}^{M}_{m}} E[W_i W_j] \\
    &= \sum_g \sum_{i,j \in \mathcal{N}^G_g} E[W_i] E[W_j] + \sum_h \sum_{i,j \in \mathcal{N}^H_h} E[W_i] E[W_j] + \sum_{m} \sum_{i,j \in \mathcal{N}^{M}_{m}} (E[W_i W_j] - E[W_i] E[W_j])
\end{align*}

Terms like $\sum_g \sum_{i,j \in \mathcal{N}^G_g} E[W_i] E[W_j]$ are positive definite. Let $W_g \in \mathbb{R}^{N^G_g}$ be the vector that stacks the $W_i$'s in cluster $g$. Then,
\begin{align*}
    \sum_g \sum_{i,j \in \mathcal{N}^G_g} E[W_i] E[W_j] &= \sum_g 1_{N^G_g}' E[W_g] E[W_g]' 1_{N^G_g}
\end{align*}
Since $E[W_g] E[W_g]'$ is positive definite by construction of an outer product, $1_{N^G_g}' E[W_g] E[W_g]' 1_{N^G_g} \geq 0$. Thus, it comes down to $\sum_{m} \sum_{i,j \in \mathcal{N}^{M}_{m}} (E[W_i W_j] - E[W_i] E[W_j])$. 
Observe that:
\begin{align*}
    E [(W_i - E[W_i]) (W_j - E[W_j])] &= E[W_i W_j] - 2 E[W_i] E[W_j] + E[W_i] E[W_j] \\
    &=  E[W_i W_j] - E[W_i] E[W_j]
\end{align*}

Thus,
\begin{align*}
    \sum_{m} \sum_{i,j \in \mathcal{N}^{M}_{m}} (E[W_i W_j] - E[W_i] E[W_j]) &= \sum_{m} \sum_{i,j \in \mathcal{N}^{M}_{m}} E[(W_i - E[W_i]) (W_j - E[W_j])]
\end{align*}

Applying a similar argument, 
\begin{align*}
    \sum_{m} \sum_{i,j \in \mathcal{N}^{M}_{m}} E[(W_i - E[W_i]) (W_j - E[W_j])] &= \sum_{m} E [1_{N_{m}}' (W_{m} - E[W_{m}]) (W_{m} - E[W_{m}])' 1_{N_{m}}] \geq 0
\end{align*}

Thus, $ \left(Q^*_n  + \sum_{m} \sum_{i,j \in \mathcal{N}^{M}_{m}} E[W_i W_j] \right) \geq 0$, yielding the result.

\end{proof}

\begin{proof} [Proof of Corollary \ref{cor:cgm2pos}]
Following the proof of Theorem \ref{thm:var_converge}, $\hat{\eta}_{ki} = \eta_{ki} (1 + o_P(1))$, so it suffices to work with $\eta_{ki}$. By using $\eta_{ki}$ in place of $W_i$ in Lemma \ref{lem:cgm2pos}, $V_k^{CGM2}/ v_k \geq 1$, so $V_k^{CGM2} \geq v_k$.
\end{proof}

\bibliographystyle{ecta}
\bibliography{mwclus}

\end{document}